\documentclass[a4paper,11pt]{article} 

\usepackage[labelfont=bf,labelsep=period]{caption}
\usepackage[nodayofweek]{datetime}
\usepackage{enumitem}
\usepackage{float}
\usepackage[margin=2.5cm]{geometry}
\usepackage{graphicx}
\usepackage{hyperref}
\usepackage[utf8]{inputenc}
\usepackage{numbertabbing}
\usepackage{times}
\usepackage{url}
\usepackage[dvipsnames]{xcolor}
\usepackage{xspace}

\usepackage{array}
\usepackage{subcaption}
\usepackage{tikz}
\usepackage{amsfonts}
\usepackage{booktabs}
\usepackage{makecell}

\usepackage{amsmath} 
\allowdisplaybreaks[2]          
\usepackage{amssymb} 
\usepackage{amsthm} 
\newtheoremstyle{plain-boldhead}
  {\topsep}
  {\topsep}
  {\itshape}
  {}
  {\bfseries}
  {.}
  { }
  {\thmname{#1}\thmnumber{ #2}\thmnote{ (\bfseries #3)}}
\newtheoremstyle{definition-boldhead}
  {\topsep}
  {\topsep}
  {\normalfont}
  {}
  {\bfseries}
  {.}
  { }
  {\thmname{#1}\thmnumber{ #2}\thmnote{ (\bfseries #3)}}
\theoremstyle{plain-boldhead}
\newtheorem{theorem}{Theorem}

\newtheorem{lemma}[theorem]{Lemma}
\newtheorem{corollary}[theorem]{Corollary}

\theoremstyle{definition-boldhead}
\newtheorem{definition}{Definition}

\newdateformat{simple}{\THEDAY\ \monthname[\THEMONTH]\ \THEYEAR}
\simple

\floatstyle{ruled}
\newfloat{algo}{htbp}{algo}
\floatname{algo}{Algorithm}

\def \ifempty#1{\def\temp{#1} \ifx\temp\empty }

\newcommand{\str}[1]{\textsc{#1}}
\newcommand{\var}[1]{\textit{#1}}
\newcommand{\op}[1]{\textsl{#1}}
\newcommand{\msg}[2]{\ensuremath{\ifempty{#2} [\str{#1}] \else [\str{#1}, {#2}] \fi}}

\newcommand{\etal}{\emph{et al.}}

\newcommand{\CC}{\ensuremath{\mathcal{C}}\xspace}

\newcommand{\CE}{\ensuremath{\mathcal{E}}\xspace}

\newcommand{\CL}{\ensuremath{\mathcal{L}}\xspace}
\newcommand{\CM}{\ensuremath{\mathcal{M}}\xspace}

\newcommand{\CP}{\ensuremath{\mathcal{P}}\xspace}
\newcommand{\CQ}{\ensuremath{\mathcal{Q}}\xspace}

\newcommand{\CS}{\ensuremath{\mathcal{S}}\xspace}

\newcommand{\CV}{\ensuremath{\mathcal{V}}\xspace}

\usepackage{tikz}
\usetikzlibrary{arrows.meta}
\usetikzlibrary{shapes.geometric}

\newcommand{\setup}[1]{\ensuremath{\op{VE.Setup(\ensuremath{#1})}\xspace}}
\newcommand{\encapsulate}[1]{\ensuremath{\op{VE.Encapsulate}(\ensuremath{#1})}\xspace}
\newcommand{\validate}[2]{\ensuremath{\op{VE.Validate}_{\ensuremath{#1}}(\ensuremath{#2})}\xspace}
\newcommand{\verify}[1]{\ensuremath{\op{VE.Verify}(\ensuremath{#1})}\xspace}
\newcommand{\extract}[2]{\ensuremath{\op{VE.Extract}_{\ensuremath{#1}}(\ensuremath{#2})}\xspace}
\newcommand{\reconstruct}[1]{\ensuremath{\op{VE.Reconstruct(\ensuremath{#1})}\xspace}}
\newcommand{\encaps}{\ensuremath{\var{e}\xspace}}
\newcommand{\share}{\ensuremath{\var{s}\xspace}}

\usepackage[backend=bibtex,style=numeric,giveninits=true,maxbibnames=99]{biblatex}
\renewbibmacro*{doi+eprint+url}{%
  \iftoggle{bbx:url}
  {\iffieldundef{doi}{\usebibmacro{url}}{}}
  {}%
  \newunit\newblock
  \iftoggle{bbx:eprint}
  {\usebibmacro{eprint}}
  {}%
  \newunit\newblock
  \iftoggle{bbx:doi}
  {\printfield{doi}}
  {}
}

\begin{document}

\title{\bf Byzantine Reliable Broadcast with Causal Ordering}

\author{Mariarosaria Barbaraci\footnotemark[1]\\
University of Bern\\
\url{mariarosaria.barbaraci@unibe.ch}
  \and Christian Cachin\footnotemark[1]\\
  University of Bern\\
  \url{christian.cachin@unibe.ch}
}

\footnotetext[1]{Institute of Computer Science, University of Bern,
   Neubr\"{u}ckstrasse 10, 3012 CH-Bern, Switzerland.}

\date{\today} 

\maketitle

\begin{abstract}\noindent
  Reliable and total-order broadcasts in the Byzantine-fault model are well
  studied, but adding causal order has received comparatively little
  attention, largely due to the complexity that stems from actions of
  Byzantine processes. Existing solutions almost exclusively build causal
  ordering on top of total-order broadcast. The combination of causal order
  with reliable broadcast remains rare, and the few solutions that exist adopt
  the classical definition of causality based on events occurring at
  individual processes (the happened-before relation). We show this definition
  is not sufficient to enforce causal ordering among broadcast messages:
  Byzantine processes can lie about, omit, and forge dependency information
  and thereby violate the causal order among self-reported events. Such
  manipulations remain indistinguishable from correct behavior to any single
  observer. We demonstrate the issue and its consequences concretely via a
  front-running attack that violates causality in reliable broadcast, but that
  cannot be captured through the existing definitions.

  To close this gap, we extend the notion of reliable broadcast to externalize
  local potential knowledge.  We use this to formalize the first
  \emph{complete} definition of causal message ordering in reliable broadcast
  under Byzantine faults. Unlike the classical formalization, this notion is
  grounded in the joint observations of a sufficiently large group of correct
  processes rather than a process's own view. Building on this definition, we
  characterize the properties of a Byzantine reliable broadcast channel that
  guarantees causal ordering. We then present an efficient protocol that
  satisfies these properties: it is resilient to the optimal number of
  $f < n/3$ Byzantine faults and for one instance that broadcasts payload
  message~$m$, it has bit complexity $O(n^2(|m| + \lambda + n))$, where
  $\lambda$ denotes the maximal size of a unique (cryptographic) label
  for~$m$.  Finally, we analyze the properties of this protocol and prove that
  it achieves Byzantine reliable broadcast with causal ordering.

\end{abstract}

\section{Introduction}
\label{sec:intro}

For implementing distributed services in potentially adversarial environments, causal ordering is essential for applications where the semantic correctness of operations depends on preserving event dependencies. Systems such as decentralized finance, collaboration platforms, and critical infrastructure are particularly vulnerable to adversaries who exploit inconsistent ordering to manipulate state, deceive users, or disrupt operations. By ensuring that correct processes observe causally related events in an order that respects these dependencies, causal ordering strengthens both reliability and security in distributed systems.
 
Many solutions~\cite{DBLP:journals/toplas/ReiterB94,DBLP:conf/crypto/CachinKPS01,DBLP:conf/dsn/DuanRZ17,DBLP:conf/crypto/Kelkar0GJ20,DBLP:conf/ccs/KelkarDLJK23,DBLP:conf/fc/CachinMSZ22,DBLP:conf/osdi/ZhangSCZA20} address this problem in \emph{total-order broadcast} or \emph{consensus} protocols and enforce a causal order on payload messages delivered to an application.
They occupy a prominent position in the blockchain space because they prevent \emph{front-running attacks} on consensus protocols and on-chain decentralized finance~\cite{DBLP:conf/sp/DaianGKLZBBJ20,DBLP:conf/uss/TorresCS21}.

In particular, Reiter and Birman~\cite{DBLP:journals/toplas/ReiterB94} introduced the notion of \emph{input causality} for total-order broadcast, according to which a malicious participant must not be able to induce delivery of a \emph{payload message}~$m'$ after observing a payload message~$m$ that has not yet been delivered. This may arise when Byzantine participants observe $m$ in \emph{low-level messages} exchanged during the execution of the protocol. Cachin~\etal~\cite{DBLP:conf/crypto/CachinKPS01} formalize this notion as \emph{secure causal atomic broadcast}, and Duan~\etal~\cite{DBLP:conf/dsn/DuanRZ17} revisit it by proposing new implementations. These formalizations of causal message order capture the intended restrictions in the Byzantine model in connection with total-ordered delivery of the payload messages, and all implementations rely on consensus. However, they fail to provide an independent definition of causality in this setting. 

As decentralized systems become increasingly large-scale and geographically distributed, the demand for \emph{latency-efficient} and \emph{scalable} solutions continues to grow.
And enforcing a total order on all messages introduces substantial overhead, particularly in Byzantine environments where an adversary may delay, reorder, or equivocate messages.

It has also been recognized that for many practical applications, total ordering is unnecessarily strong and too costly~\cite{DBLP:conf/sosp/MoraruAK13,DBLP:conf/opodis/RyabininGS25}, especially also for BFT state-machine replication and in the domain of cryptocurrencies~\cite{DBLP:journals/dc/DickersonGHK20,DBLP:journals/dc/GuerraouiKMPS22,DBLP:conf/dsn/CollinsGKKMPPST20,DBLP:conf/icdcs/AlposCMZ21}. Correctness often depends \emph{only} on preserving the order between causally related messages, while concurrent messages with independent requests may be processed in different orders without affecting system semantics.
By relaxing the total order of consensus to reliable broadcast, systems can process messages concurrently with the assurance that eventually all processes will observe all messages. This reduces synchronization cost, improves throughput and responsiveness, and leads to more scalable platforms.
Practical systems have recently been proposed for the blockchain space that follow this pattern~\cite{DBLP:conf/eurosys/LinFZ025,DBLP:journals/corr/abs-2501-06531,DBLP:journals/corr/abs-2506-01885}.

Even though it is well-known that total-order message delivery and
causal-order message delivery are orthogonal properties~\cite{HadzilacosT93},
formulating and implementing reliable broadcast with causal ordering in the
Byzantine model introduces significant challenges. In particular, malicious
participants may attempt to fabricate, omit, or reorder causal dependencies in
ways that are difficult to detect. In systems tolerating only benign faults,
such as crashes, the sender of a message attaches information about past
events to it, thereby \emph{self-reporting} its dependencies. This approach
falls short here because a Byzantine sender may equivocate or intentionally
manipulate dependency information to violate causality among the correct processes.
 
In this work, we study causal ordering of messages for Byzantine broadcast \emph{without} total order, in particular for \emph{reliable broadcast}. We revisit existing notions of causal message ordering in the Byzantine model~\cite{DBLP:journals/tcs/AuvolatFRT21,DBLP:journals/tpds/MisraK24} and recognize that existing definitions do not account for possible Byzantine behavior and effects that such behavior may have, as they still rely on the traditional definition of causality based on events observed at single processes (Lamport's happened-before relation~\cite{DBLP:journals/cacm/Lamport78}). To address this gap, we first motivate and introduce an extended abstraction of reliable broadcast and then formalize the first \emph{complete} notion of causal order among payload messages for the Byzantine model. It is grounded in observations made by \emph{sufficiently many correct} processes, rather than a single process's view. However, it is defined in such a way to mirror its crash-fault counterpart. 
We then define a \emph{reliable broadcast channel} in the Byzantine model with causal-order message delivery.

Furthermore, we provide a protocol of a broadcast channel that extends Bracha's~\cite{DBLP:journals/iandc/Bracha87} reliable broadcast to many concurrent instances that respect causal ordering. In one protocol instance, the sender first hides the payload message cryptographically using a \emph{verifiable encapsulation (VE)} primitive, and the algorithm then proceeds in five rounds: send, echo, ready, schedule, and reconstruction. At the outset of the protocol, processes do not observe the message and agree on a label that VE has associated with the message. A second key mechanism is the use of vector clocks. Each process maintains in a local \emph{scheduled vector} a list of how many labels per process it has already scheduled. This vector is shared with other processes every time an echo message is being sent. Then, through a \emph{causal barrier} condition on echo/ready validation, a process accepts an echo/ready message only when the attached vector clock, tracking causal dependencies, is consistent with its local scheduled vector. Consequently, the protocol ensures that labels are scheduled in a causality-respecting order. Once a label is scheduled, it is placed in an ordered queue; after enough processes have scheduled some label, the processes start the reconstruction round, recover the payload, and deliver it in that scheduling order.

To complement the above discussion and further motivate the goal of this work, consider how a front-running attack on a replicated service may arise without total-order guarantees.
Let $\CP = \{p_1, p_2, p_3, \bar{p}_4\}$ be a set of four processes, where $\bar{p}_4$ is Byzantine and controlled by an adversary that also controls the network. In the execution shown in Figure~\ref{fig:frontrun}, process $p_1$ broadcasts a message $m$. The solid arrows abstract an execution of Byzantine reliable broadcast~\cite{DBLP:books/daglib/0025983}, where the dot at the sender denotes the broadcast event and the arrowhead denotes message delivery. For clarity, the dashed arrow explicitly represents the \emph{first} message sent from $p_1$ to $\bar{p}_4$ during the protocol that contains $m$ or lets $\bar{p}_4$ obtain~$m$.
Upon learning $m$, the adversary gains the ability to inject a message $m'$ on behalf of $\bar{p}_4$ before some correct processes deliver $m$. By selectively delaying the delivery of $m$, the adversary can cause some correct processes to deliver $m'$ before $m$, while omitting the existing dependency between the two messages. As a result, the adversary effectively hides the causal relation between $m$ and $m'$ that would exist under a global view of the execution.
The attack is subtle because it makes $m$ and $m'$ appear concurrent from the perspective of correct processes. Consequently, existing reliable broadcast protocols aiming at preserving causal ordering may fail to detect this violation. The reason is that the attack does not explicitly contradict the dependency information reported by Byzantine participants, which in this case can be the empty set or arbitrarily chosen. From the viewpoint of correct processes~$p_2$ and $p_3$, the causal dependencies associated with $m'$ are simply missing, even when a hidden causal relation to $m$ exists under a global view of the execution.

\begin{figure}[th]
  \centering
  \begin{tikzpicture}[>=Stealth,thick]
    
    \draw (0,0)  node[left]{$p_1$} -- (8,0)   node[pos=0.8, above]{$p_1$ delivers $m, m', \dots$};
    \draw (0,-1) node[left]{$p_2$} -- (8,-1)  node[pos=0.8, above]{$p_2$ delivers $m', m, \dots$};
    \draw (0,-2) node[left]{$p_3$} -- (8,-2)  node[pos=0.8, above]{$p_3$ delivers $m', m, \dots$};
    \draw (0,-3) node[left]{$\bar{p}_4$} -- (8,-3)  node[pos=0.8, above]{$\bar{p}_4$ delivers $m', m, \dots$};
    
    \filldraw[black] (1,0) circle (3pt) node[above]{$m$};
    \filldraw[black] (2,-3) circle (3pt) node[below]{$m'$}; 
    \draw[->, dashed] (1,0)  -- (1.5,-3); 
    \draw[->, bend left=30] (1,0)  to (2, 0);
    \draw[->] (2,-3) -- (3.5,-2);
    \draw[->] (2,-3) -- (3,-1);
    \draw[->] (2,-3) -- (2.5,0);
    \draw[->, bend left=30] (2,-3)  to (3.5, -3);
    \draw[->] (1,0)  -- (3.5,-1); 
    \draw[->] (1,0)  -- (4,-2); 
    \draw[->] (1,0)  -- (4,-3);

  \end{tikzpicture}
  \caption{Process $\bar{p}_4$ front-runs message $m$ at $p_2$ and $p_3$.}
  \label{fig:frontrun}
\end{figure}

Even though almost all cryptocurrencies and blockchain networks today deliver their requests in total order among their validators, one can conceive application services for which reliable request delivery is sufficient, assuming that it also respects causal order. A name registry, for instance, may record names and assign exclusive ownership over them to clients. Names are from a large domain, hence, the service can be implemented so that different names are registered concurrently in order to increase throughput. Only when multiple clients try to register the same name, the registry has to invoke consensus and order these requests. If the communication for the registry uses Byzantine reliable broadcast, an ``interesting'' name contained in a client request may be stolen by a malicious validator process, as shown in the front-running scenario. Causally ordering the requests prevents this attack.

\paragraph{Organization.}
The rest of the paper is organized as follows. In Section~\ref{sec:background} we introduce the system model and the necessary background. In Section~\ref{sec:causality} we present our new definitions of broadcast and causality in the Byzantine setting, and we discuss the security properties required by a reliable broadcast primitive to guarantee causal ordering. In Section~\ref{sec:protocol} we present a protocol that implements the proposed abstraction and in Section~\ref{sec:analysis} show that it satisfies the required properties.
Finally, in Section~\ref{sec:relwork} we review related work. Section~\ref{sec:conclusion} concludes the paper.
Moreover, in Appendix~\ref{app:secure_VE} we discuss secure implementations of VE.

\section{Preliminaries}
\label{sec:background}

\subsection{Distributed system model}

We consider a standard distributed system consisting of a set $\CP = \{p_1, p_2, \ldots, p_n\}$ of $n$ processes that run local computations and communicate by exchanging messages over a network in order to perform a distributed protocol~\cite{DBLP:books/daglib/0025983}. Processes communicate through \emph{reliable} and \emph{authenticated} point-to-point channels. We refer to a \emph{crash-fault} model when a process may stop executing local computations and sending messages to other processes. Conversely, we refer to a \emph{Byzantine-fault} model when a process may deviate in arbitrary ways from the protocol. The failure model sets a lower bound for the size of the set of \emph{faulty} processes: with $f$ faulty processes, one requires $n > 2f$ for crash faults and $n > 3f$ for Byzantine faults. We call processes that follow the protocol \emph{correct}.

\paragraph{Timing assumptions.}
The network is asynchronous: there is no \emph{global} notion, no assumption on the time it takes for messages to be delivered, and no bound on local computation time. Each process can only rely on the events it observes, which arise either from local computation or from messages received from other processes. Following the notions introduced by Lamport~\cite{DBLP:journals/cacm/Lamport78}, every process may keep track of a time by incrementing a \emph{logical clock} that counts events.
Logical clocks naturally capture a cause-effect relation of actual causal influence in this setting. An \emph{execution} of a distributed protocol can be seen as a sequence of events visible only to a global observer. The \emph{happened-before} relation captures the potential causal precedence between events occurring at the same process (trivially) or between a reception and subsequent transmission event. In terms of logical clocks, an event happens before another if and only if the logical clock attached to it is strictly smaller.

\subsection{Broadcast abstraction}

A fundamental primitive for \emph{group communication} is the \emph{broadcast} abstraction~\cite[Sec.~3]{DBLP:books/daglib/0025983}. It allows to extend the traditional \emph{client-server} interaction to a set of processes~\cite{DBLP:journals/csur/ChocklerKV01}. 
Since multiple payload messages may be disseminated like this, the abstraction operates like a \emph{broadcast channel}. 

More formally, given a set of $n$ processes~$\CP$, a broadcast abstraction allows a sender process $p_s \in \CP$ to disseminate a payload message~$m$ to all processes in $\CP$. This primitive is characterized by a $\op{broadcast}$ event that occurs only at the sender process. We make the standard assumption that every correct process broadcasts a particular message only once.
Every process receives or \emph{delivers} a payload through a $\op{deliver}$ event. The abstraction ensures \emph{validity} in the sense that every message broadcast by a correct sender is eventually delivered; \emph{no duplication}, that no message is delivered twice; \emph{integrity}, ensuring that every delivered message was actually broadcast; and finally \emph{agreement}, the all-or-nothing property of message delivery. The latter means that if a message~$m$ is delivered by some correct process, then every correct process eventually delivers~$m$.

\paragraph{Causal broadcast.}
As protocols become more complex in the context of group communication with crash faults, the happened-before relation can be extended to capture causal dependencies between broadcast and deliver events, specifically by considering the order in which processes broadcast payload messages and how those payloads are delivered \emph{by each process} in the system.
Thus, one can extend the happened-before concept to the broadcast channel~\cite{DBLP:journals/cacm/Lamport78,DBLP:conf/sosp/BirmanJ87}, where events represent \emph{broadcast} or \emph{deliver} events of messages exchanged by the processes in the system, as follows.

\begin{definition}[Causal-order relation with crash faults]
  \label{def:causal:relation}
Given two messages~$m_1$ and~$m_2$, we say that~$m_1$ \emph{causally precedes}~$m_2$, and write $m_1 \prec m_2$, if either of the following conditions hold:
\begin{enumerate}
\item A process broadcasts~$m_1$ and then broadcasts~$m_2$.
\item A process delivers~$m_1$ and then broadcasts~$m_2$.
\item There exists a message~$m'$ such that $m_1 \prec m'$ and $m' \prec m_2$.
\end{enumerate}
If~$m_1 \not\prec m_2$ and~$m_2 \not\prec m_1$, we say that the two messages are \emph{concurrent}.
\end{definition}

The causal-order relation is used to add the property of \emph{causal delivery} to reliable broadcast, which preserves \emph{causal ordering} among the delivered messages. It ensures that when a message $m_1$ causally precedes another message $m_2$, then no process delivers $m_2$ unless it has already delivered $m_1$.

A reliable broadcast that additionally satisfies causal delivery realizes a \emph{causal broadcast} protocol under \emph{crash faults}, which is a standard notion in the distributed-computing literature~\cite{HadzilacosT93, DBLP:books/daglib/0017536}.
Causal order actually strengthens FIFO order, which restricts the delivery order only for those payload messages that have been broadcast by the same process. It is also widely understood that imposing causal order on reliable broadcast is orthogonal to requiring total order for all delivered payloads~\cite{HadzilacosT93}.

Practical implementations of causal-order broadcast for systems subject to crash faults rely on an underlying reliable broadcast mechanism for communication, together with a data structure that records the causal past of each message.
This can either take the form of a \emph{vector clock} that consists of one sequence number per sender that indicates how many payload messages from that sender have already been delivered in the context where the sender has previously broadcast the particular payload message.
The role of the vector clock is twofold: (1) a sender process includes its causal past when broadcasting a message; (2) a receiving process stores a message~$m$, together with its associated vector clock $W$, in a pending queue until all the messages reported in $W$ have been delivered.
Alternatively, and only as a conceptual solution, one might also add the complete causal past to every payload message during broadcast~\cite{HadzilacosT93, DBLP:books/daglib/0017536}.

\subsection{Byzantine reliable broadcast}
\label{sec:brb}

The existing protocols for reliable broadcast with crash faults rely on
correct processes to retransmit low-level messages and assume that all data
reported by other processes is accurate and represents the true state at
those processes. In the Byzantine model, however, this may not hold.  Instead
one has to invoke a different approach, since protocols can only rely on
actions of correct processes.  In particular relies on the view of a majority
of correct processes or a \emph{quorum} to consider an information reliable. A
\emph{Byzantine quorum} tolerating $f$ faulty processes is a set of more than
$\frac{n+f}{2}$ processes~\cite{DBLP:journals/dc/MalkhiR98}. Any two such
quorums always overlap in at least one correct process.

The notion of \emph{Byzantine reliable broadcast} is traditionally defined for
a single protocol instance where agreement is reached on delivering one
payload message. It can be extended modularly to broadcasting multiple
messages in the sense of a \emph{broadcast channel} by collecting together
many instances, of which each is identified uniquely~\cite[Sec.~3.12]{DBLP:books/daglib/0025983}. 

More precisely, each attempt to broadcast a payload message by a particular sender is associated with a \emph{label} from a global label space~$\CL$ that is partitioned by sender into disjoint subsets~$\CL_i$ for $p_i \in \CP$. When process~$p_i$ broadcasts a payload message $m$ in the message space $\CM$, the channel assigns a fresh label $\ell\in\CL_i$, and the instance is identified by $\ell$ alone, with the sender implicitly known through the partition of the label space. A partial map $\mu:\CL \rightharpoonup \CM$ represents this binding on issued labels, so that each defined label corresponds to exactly one attempt to broadcast a payload message. A correct process broadcasts a message~$m$ through a $\op{ch-broadcast($m$)}$ event. The abstraction then selects a label~$\ell$ and reports $\ell$ in a $\op{ch-deliver}(\ell, m)$ event to indicate that the payload message is received.

\begin{definition}[Byzantine reliable broadcast channel]
  \label{def:brb:channel}
  A \emph{Byzantine reliable broadcast channel (BRCH)} allows processes to
  broadcast payload messages and delivers them such that it satisfies the
  following properties for each label~$\ell$:
\begin{description}
\item[\textit{Validity}:] If a correct process~$p_s$ broadcasts a message~$m$,
  then every correct process eventually delivers~$m$ with associated
  label~$\ell$, such that $\ell \in \CL_s$.
\item[\textit{No duplication}:] For every label~$\ell$, every correct process
  delivers at most one message with label~$\ell$.
\item[\textit{Integrity}:] If some correct process delivers a message~$m$ with
  label~$\ell$, where $\ell \in \CL_s$ for some process~$p_s$ and process~$p_s$
  is correct, then~$m$ was previously broadcast by~$p_s$.
\item[\textit{Agreement}:] If some correct process delivers a message~$m$ with
  label~$\ell$, then every correct process eventually delivers message~$m$
  with label~$\ell$.
\end{description}
\end{definition}

It is worth noting that the agreement property may be decomposed in two
orthogonal ones that hold for each label: First, a safety property, \emph{consistency}, which
enforces that if two correct processes deliver a message with some
label~$\ell$, then it is the same message~$m$ that every other correct process
delivers with~$\ell$, and second, a liveness property called \emph{totality},
which captures the all-or-nothing requirement, that if one correct delivers
some message with label~$\ell$, then all correct processes eventually do
so~\cite{DBLP:books/daglib/0025983}.

Implementations of Byzantine reliable broadcast channel are derived directly
from the celebrated protocol of Bracha~\cite{DBLP:journals/iandc/Bracha87} for
broadcasting one single payload with an associated label~$\ell$.  The
broadcast channel runs one instance of it for each of the $n$ processes that
may act as senders concurrently.  When one instance terminates, it
starts the next one for this sender. This naturally defines also per-sender
sequence numbers for payload messages on the broadcast channel.

We briefly describe here one instance of Bracha's protocol.  It proceeds in three
rounds of communication: In the \str{send} round, the sender~$p_s$ sends the
message to all processes $p_i \in \CP$.  In the \str{echo} round, a correct
process $p_i$ retransmits the message from the sender $p_s$ to all.  In the
\str{ready} round, a correct process $p_i$, upon collecting a Byzantine quorum
\str{echo} messages, retransmits again the payload to all.  Upon collecting
more than $f$ \str{ready} messages, a correct process that hasn't still
received the quorum of \str{echos}, skips ahead and sends a \str{ready}
message in what is called the amplification step.  (It relies on the
fact that at least one correct process has received a quorum of \str{echo}
messages for the payload.)  And upon collecting more than $2f$ \str{ready}
messages, a correct process delivers the contained payload.  (This can be done
since there are more than $f$ correct processes that entered the \str{ready}
round and will eventually bring all correct processes to deliver.)

\section{Causal ordering in the Byzantine setting}
\label{sec:causality}

The traditional definition of causality according to
Definition~\ref{def:causal:relation} is expressed through events occurring at
all processes, including the faulty ones.  This does not extend to the most
useful notion of causal ordering for Byzantine reliable broadcast because it
lacks information on events that occur at Byzantine processes.  This section
discusses the issue and presents a solution by extending the interface of a
broadcast primitive.

\subsection{Limitations of existing approaches}

There are two fundamental issues that make the traditional notion of causal
order problematic in the Byzantine model.

\paragraph{Events for defining causality.}
The first problem derives from the lack of information on events that occur at
Byzantine processes.  Any formal notion in the Byzantine model must be stated
in terms of events occurring at correct processes, as no assumptions can be
made on the behavior of the adversary and the processes under its control.  In
other words, one cannot express that a faulty process may have ``delivered''
some payload message~$m$ that subsequently led it to ``broadcast'' a
message~$m'$ that would then causally depend on~$m$.  The adversary may skip
any kind of ``broadcasting behavior'' for~$m'$ as long as its actions make
some correct process deliver~$m'$.

One of the most advanced definitions of causal reliable broadcast with
Byzantine faults, by Auvolat~\etal~\cite{DBLP:journals/tcs/AuvolatFRT21},
indeed starts from Definition~\ref{def:causal:relation}, but \emph{restricts}
its second condition to events occurring at \emph{correct processes}. This
modification results in a weaker notion than the one we envisage here.  More
precisely, Auvolat~\etal's \emph{Byzantine causal relation} orders all payload
messages from the \emph{same} sender, whether it is faulty or correct, with
respect to each other and also considers that any payload message delivered by
a \emph{correct} process causally precedes all payloads broadcast subsequently
by that process.  (Naturally, the causal order includes also the transitive
hull.)

In particular, this means that every correct process delivers the payload
messages from one sender in the order imposed by the sender (i.e., FIFO
order).  It also implies that the local order at a correct process
among all messages it delivers and subsequently broadcasts is maintained
by the causal order.  However, the notion does not capture situations
in which such a causal influence occurs through the actions of
faulty processes, such as the example shown in the introduction
(Fig.~\ref{fig:frontrun}).

\paragraph{Low-level messages revealing information about payloads.}
The second difficulty in the Byzantine model concerns actions available to
faulty processes when they exchange low-level protocol messages.
When a correct process broadcasts a payload message~$m$, it must communicate
$m$ to the other processes through low-level messages.  Unless cryptographic
techniques are used in the protocol to hide~$m$, an adversary controlling the
network has the ability to observe information about $m$ in low-level messages
as soon as a $m$ has been broadcast.  Such information can then be exploited
by the adversary even if $m$ is never delivered by any faulty process.

Many existing notions of causal-order broadcast in the Byzantine model have
addressed \emph{only total-order message delivery} and solved the above problem by
adding confidentiality for payload messages through cryptographic means up to
a certain point in time.  This approach was pioneered by Reiter and
Birman~\cite{DBLP:journals/toplas/ReiterB94} and later extended and refined by
many
others~\cite{DBLP:conf/crypto/CachinKPS01,DBLP:conf/dsn/DuanRZ17,DBLP:conf/tokenomics/MalkhiS22}.
Their common theme are cryptographic notions of confidentiality that rely on
\emph{total-order delivery} of payloads, i.e., that all correct processes
deliver the payload messages in the same sequence.

Implementations have either used a \emph{commit-reveal} strategy, which
requires the sender to initially disseminate a commitment or sharing of the
input message, and once an order has been decided, they reveal the message.
Notably this solution fails in the presence of faulty senders, as it requires
the sender to open the commitment in a second stage and speak twice.

Other implementations have \emph{encrypted} the payloads message such that set
of processes may jointly decrypt it afterward, typically through threshold
cryptography that requires correct processes to collaborate for
decryption.

Both techniques fail in our setting that does not assume a
total order. In particular, the adversary can influence the local delivery
order at correct processes after the payload has been revealed or decrypted.
This may contradict the desired notion of causality.  The inherent problem
lies in the lack of synchronization among the correct processes for revealing
or decrypting payloads since this may occur concurrently.

\subsection{Stronger causal order for Byzantine reliable broadcast}
\label{sec:causality_definition}
  
To prevent the front-running attack, one should postpone the moment in which the adversary learns about the payload in message~$m$ to a later point after the \op{ch-broadcast} event. Moreover, correct processes must be able to ``observe'' a \emph{potential causal} dependency for a message $m$ before its delivery. 
We therefore formally define \emph{intermediate} events, which occur between \op{ch-broadcast} and \op{ch-deliver}. 

In the context of a broadcast channel, we leverage the presence of a label~$\ell$ associated with message~$m$ to define (1) when $\ell$ is first observed and (2) when the message~$m$ associated with $\ell$ can safely be disclosed. 
Crucially, because a correct process cannot ``trust'' the legitimacy of a label~$\ell$ the first time it observes it, it needs to know when a particular message referenced by the associated label~$\ell$ is being considered for delivery by sufficiently many processes.
We capture these occurrences at a correct process~$p_i$ like this: we say a correct process \op{ch-acknowledges} the presence of a label~$\ell$ when it first observes $\ell$, and we say it \op{ch-schedules} a label~$\ell$ when enough processes agree on the existence of such $\ell$, and, consequently, on the existence of the associated message~$m$. 

These intermediate events may be exposed by an enhanced broadcast abstraction. Notably, we want to be able to properly define what happens in the system, even as a consequence of adversarial actions, through the perspective of just correct processes. For instance, when a correct process \op{ch-acknowledges} a label~$\ell$, it can either be that a correct process has \op{ch-broadcast} a message~$m$ associated with $\ell$ or that a Byzantine process has injected a message~$m$ associated with $\ell$ in the network. In both cases, the \op{ch-acknowledge} event signals that a message associated with $\ell$ might exist, and therefore it can be used as an anchor for defining potential causality.

We repeat the events characterizing a Byzantine broadcast channel and introduce the new intermediate events. They occur in this order at all correct processes, where $\op{ch-broadcast}$ occurs only at the sender. 

\begin{itemize}
  \item $\op{ch-broadcast}(m)$: a sender process $p_s$ initiates a broadcast for a message~$m$.
  \item $\op{ch-acknowledge}(\ell)$: a process~$p_j \in \CP$ acknowledges a label~$\ell$.
  \item $\op{ch-schedule}(\ell)$: a process~$p_j \in \CP$ schedules a label~$\ell$. 
  \item $\op{ch-deliver}(\ell,m)$: a process~$p_j \in \CP$ delivers a label~$\ell$ with the associated message~$m$.
\end{itemize}

We now use these intermediate events to expose actions in the system from the point of view of a global observer monitoring the correct processes during an execution.
Looking ahead, we want our definition of causality to not rely on the local view at a single process. Instead, we want to be able to substitute local observations with global observations.

The first global event captures the moment when the first correct process receives a label~$\ell$. It signals that a message associated with $\ell$ might exist.

\begin{definition}[Commit event]
  \label{def:commit:event}
A \emph{commit event} for a label~$\ell$, denoted $\CC(\ell)$, is the first point in time when a correct process \op{ch-acknowledges} a label~$\ell$.
\end{definition}

The second global event captures when \emph{enough} correct processes consider a label~$\ell$ ready for delivery. Intuitively, we would like a protocol to postpone the release of payload~$m$ until this moment. 

\begin{definition}[Quorum schedule event]
  \label{def:se:event}
A \emph{quorum schedule event} for a label~$\ell$, denoted $\CQ\CS(\ell)$, is the first point in time when more than $\frac{n-f}{2}$ correct processes have \op{ch-scheduled} a label~$\ell$. 
\end{definition}

Finally, we establish an actual causal-influence relation between these events when occurring for two labels from different senders. Notably, for any label~$\ell$, it trivially holds that $ \CC(\ell)$ precedes $\CQ\CS(\ell)$.

\begin{definition}[Schedule order]
  \label{def:so:event}
  Given two labels $\ell_i$ and $\ell_j$, such that $\ell_i \in \CL_i$ and $\ell_j \in \CL_j$, where $\CL_i \ne \CL_j$, we define a \emph{schedule order} relation, denoted $\ell_i\stackrel{SO}{\to}\ell_j$, whenever $\CQ\CS(\ell_i)$ precedes $ \CC(\ell_j)$. 
\end{definition}

The schedule order is \emph{strict partial order} on labels and associated messages, as it is irreflexive and transitive. 
Intuitively, we would like a protocol preserving causal ordering to postpone the release of payload~$m$ until the moment when the associated label~$\ell$ reaches the quorum schedule event $\CQ\CS(\ell)$. Let $m$ be a message broadcast by a correct process and $\ell$ be the associated label. Then, no event preceding  $\CQ\CS(\ell)$ should depend on the payload $m$.

We are now finally ready to define a new notion of causality in the Byzantine setting. Notably, the relation is defined on the set of messages delivered by correct processes.
Following the structure of Definition~\ref{def:causal:relation}, the new causal order relation covers three cases: (1) messages sent by the same sender, (2) messages sent by different senders, and (3) transitivity. For messages emitted by the same sender, similar to Auvolat~\etal~\cite{DBLP:journals/tcs/AuvolatFRT21}, we require that if a correct process delivers two messages from the same sender, then the order of delivery reflects the order chosen by the sender, independently of the time at which the messages appeared in the system. 
The second case instead captures causal dependencies arising from the global view on protocol events across different processes.
In the crash-fault model, we consider the \emph{local} causal influence relation between the \op{ch-deliver} of a message~$m$ and a \op{ch-broadcast} of a message~$m'$. In the Byzantine setting, we consider the \emph{global} causal influence relation that substitutes the \op{ch-deliver} with an event happening before delivery, and the \op{ch-broadcast} with an event happening after the broadcast in the view of correct processes. This is exactly what the schedule order captures, as it relates the quorum schedule event of a label~$\ell$ with the commit event of another label~$\ell'$. 
The third case is just the transitive closure of the first two. 
We introduce the new notion as follows:

\begin{definition}[Causal-order relation for Byzantine faults]
  \label{def:byz:causal:relation}
Consider two labels $\ell_1$ and $\ell_2$, associated with messages $m_1$ and $m_2$, respectively, delivered by any correct process. We say $(\ell_1, m_1)$ \emph{causally precedes} $(\ell_2, m_2)$ \emph{in the Byzantine setting} and write $(\ell_1, m_1) \prec_B (\ell_2, m_2)$ whenever any of the following conditions hold:
\begin{enumerate}
  \item A correct process $p_i$ delivers $(\ell_1, m_1)$ before $(\ell_2, m_2)$ for the same sender $p_s$, such that $\ell_1, \ell_2 \in \CL_j$; if the sender $p_s$ is correct, it has broadcast $m_1$ before $m_2$.
  \item The two labels $\ell_1$ and $\ell_2$ satisfy the schedule order, i.e., $\ell_1\stackrel{SO}{\to}\ell_2$; or
  \item There exists a label~$\ell'$ associated with message $m'$ such that~$(\ell_1, m_1) \prec_B (\ell', m')$ and $(\ell', m') \prec_B (\ell_2, m_2)$.
\end{enumerate} 
If~$(\ell_1, m_1) \not\prec_B (\ell_2, m_2)$ and~$(\ell_2, m_2) \not\prec_B (\ell_1, m_1)$, we say that the two messages are \emph{concurrent}.
\end{definition}

Since the adversary and the Byzantine processes must not let any information between messages flow \emph{outside} the causal-order relation, we must also add a confidentiality requirement for payloads. In particular, we require that a payload message~$m$ with label~$\ell$ remains hidden up to~$\CQ\CS(\ell)$.

With all necessary ingredients now in place, 
we introduce a new primitive, Byzantine Reliable Broadcast with Causal Ordering (BRB-CO), which provides a secure building block in the Byzantine setting. BRB-CO extends the properties of a Byzantine reliable broadcast channel with three more properties: \emph{completeness}, \emph{hiding}, and \emph{Byzantine causal delivery}. In the following, we define such primitives and repeat the properties of BRCH~(Definition ~\ref{def:brb:channel}) to account for the additional events.

\begin{definition}[Byzantine reliable broadcast with causal ordering (BRB-CO)] 
  \label{def:brb:co}
A \emph{Byzantine reliable broadcast with causal ordering (BRB-CO)} is a Byzantine reliable broadcast channel that satisfies the following properties:
\begin{description}
  \item[\textit{Validity}:] If a correct process $p_s$ broadcasts a message $m$, then every correct process eventually delivers~$m$ with associated label~$\ell$.
  \item[\textit{No duplication}:] For every label~$\ell$, every correct process delivers at most one message with the same label~$\ell$.
  \item[\textit{Integrity}:] If some correct process $p_i$ delivers a message $m$ and label~$\ell$ with $\ell \in \CL_s$ and the sender process $p_s$ is correct, then $m$ was previously broadcast by $p_s$, and $p_i$ had acknowledged label~$\ell$. 
  \item[\textit{Agreement}:] If some correct process delivers a message $m$ with label~$\ell$, then every correct process eventually delivers message $m$ with the same label~$\ell$. 
  \item[\textit{Completeness}:] If some correct process schedules a label~$\ell$, then it will eventually deliver the message $m$ associated to $\ell$.
  \item[\textit{Hiding}:] For any label~$\ell$ associated with a message~$m$ broadcast by a correct process~$p_s$, no process other than~$p_s$ can determine any information about~$m$ before the quorum schedule event~$\CQ\CS(\ell)$ occurs.
  \item[\textit{Byzantine Causal Delivery}:] For any label~$\ell_1$ associated with message $m_1$ that causally precedes $\ell_2$ associated with $m_2$, i.e., $(\ell_1, m_1) \prec_B (\ell_2, m_2)$, no correct process delivers $m_2$ with label~$\ell_2$ unless it has already delivered $m_1$ with label~$\ell_1$.
\end{description}
\end{definition}

Notably, the \emph{completeness} property ties the scheduling of a label to eventual delivery, effectively decoupling agreement on the label from retrieval of the associated message.
The \emph{hiding} property requires any implementation to reveal the payload message after the quorum schedule event, that is, only once enough processes have reliably observed that label.
Together, the two properties, support and guarantee \emph{Byzantine causal delivery}.

Our notion anchors the causal order in events that arise during the execution. This parallels the traditional causal-order definition in the literature on distributed systems and, unlike earlier definitions of causal-order broadcast in the Byzantine setting~\cite{DBLP:conf/crypto/CachinKPS01,DBLP:conf/dsn/DuanRZ17}, it does not define causality directly from cryptographic secrecy.

\section{Implementation of BRB-CO}
\label{sec:protocol}

In this section, we first define a primitive, which we refer to as \emph{verifiable encapsulation (VE)}, which allows to \emph{hide} a message $m$ among a set of $n$ processes, and later to recover it from a fraction of them. We then introduce a complete protocol for BRB-CO. Finally, we study the complexity of the proposed implementation.

\subsection{Verifiable Encapsulation}

We model Verifiable Encapsulation as an \emph{ideal} distributed functionality, to which every process $p_i \in \CP$ has access. The ideal implementation maintains internal shared state through two associative arrays $\CE$ and $\CV$: for each message $m$ the implementation assigns a fresh identifier $\ell_e$ such that $\CE[\ell_e] = m$, and stores a tuple $(\encaps_i, \share_i)$ per process, such that $\CV[p_i, \ell_e] = (\encaps_i, \share_i)$. Notably, we call $\encaps_i$ \emph{encapsulation data} and $\share_i$ \emph{share}, and they are used to validate the encapsulation and recover the data for reconstruction, respectively. Practical instantiation of VE may also merge the roles of $\encaps_i$ and $\share_i$ into one value.

VE provides the following operations: 

\begin{itemize}
    \item $\setup{n, k} \rightarrow pp$. This function initializes the primitive for $n$ processes and reconstruction threshold $k \le n$. It returns public parameters $pp$ to each process $p_i \in \CP$.
    \item $\encapsulate{m} \rightarrow (\ell_e, [\encaps_i]_{i \in [n]})$. On input message $m$ the implementation chooses a fresh encapsulation identifier $\ell_e$ and computes a vector of encapsulated data $[\encaps_i]_{i \in [n]}$ and shares $[\share_i]_{i \in [n]}$  for each process $p_i \in \CP$ in the system. It updates its local state such that $\CE[\ell_e] = m$ and $\CV[p_i, \ell_e] = (\encaps_i, \share_i)$ for each $p_i \in \CP$. It outputs identifier $\ell_e$ and vector $[\encaps_i]_{i \in [n]}$. The vector of shares remains part of the state. Any process may invoke this function.
    \item $\validate{p_i}{\ell_e, \encaps_i} \rightarrow b$. The invocation specifies a caller process~$p_i \in \CP$, and on input an identifier $\ell_e$ and encapsulated data $\encaps_i$, the implementation outputs a boolean value $b \in \{0,1\}$. The response is $1$ if and only if there exist $m$ and $\share_i$ such that $\CE[\ell_e]=m$ and $ \CV[p_i, \ell_e]=(\encaps_i, \share_i)$; otherwise, the response is~$0$. Only $p_i$ may invoke this function. 
    \item $\extract{p_i}{\ell_e, \encaps_i} \to \share_i$, where $s_i$ may also be $\bot$. The invocation specifies a caller process~$p_i \in \CP$, an identifier~$\ell_e$, and encapsulated data~$\encaps_i$. The method outputs a share~$\share_i \neq \bot$ if there exists some $m$ such that $\CE[\ell_e]=m$ and $\CV[p_i, \ell_e]= (\encaps_i, \share_i)$; and $\bot$ otherwise. Only $p_i$ may invoke this function.
    \item $\verify{p_i, \ell_e, \share_i} \rightarrow b$. On input a process $p_i$, an identifier $\ell_e$, and a share $\share_i$, the function outputs a boolean value $b \in \{0,1\}$. The response is $1$ if and only if there exists an $m$ such that $\CE[\ell_e] = m$ and it holds $ \CV[p_i, \ell_e] = (\encaps_i, \share_i)$. Otherwise, the output is $0$. Any process may invoke this function.
    \item $\reconstruct{\ell_e, \CS = \{\share_i\}} \to m$, where $m = \bot$ is also possible. On input an identifier $\ell_e$ and a set of shares $\CS$ from different processes, the function outputs some message~$m$. The output $m$ is not $\bot$ whenever $\CE[\ell_e] = m$ and there exists a set of at least $k$ processes such that for each $p_i$ in this set and $\share_i \in \CS$, it holds $\CV[p_i, \ell_e] = (\encaps_i, \share_i)$ for some $\encaps_i$. Otherwise, the function outputs $\bot$. Any process may invoke this function.
\end{itemize}
Some encapsulation data~$\encaps_i$ or a share $\share_i$ is called \emph{valid} (with respect to label~$\ell_e$ and process~$p_i$) when either $\validate{p_i}{\ell_e, \encaps_i} = 1$ or $\share_i$ is returned by $\extract{p_i}{\ell_e, \encaps_i}$.
According to this specification, the VE primitive provides the following three properties, which are fundamental for our use:
\begin{itemize}
\item \emph{Identifier uniqueness}: Given a identifier $\ell_e$, there exists at most one message $m$ such that $\CE[\ell_e] = m$ and, for every other process $p_i$, there exist at most one tuple $(\encaps_i, \share_i)$ in $\CV[p_i, \ell_e]$. 
\item \emph{Privacy}: Given a set of valid shares $\CS = \{\share_i\}$, i.e., there exist an identifier $\ell_e$ and an encapsulation data $\encaps_i$ such that $\CV[p_i, \ell_e]=(\encaps_i, \share_i)$. If $|\CS| < k$, no information on $m$ is revealed.
\item \emph{Reconstruction}: Given a set of valid shares $\CS = \{\share_i\}$, i.e., there exist an identifier $\ell_e$ and an encapsulation data $\encaps_i$ such that $\CV[p_i, \ell_e] = (\encaps_i, \share_i)$. If $|\CS| \ge k$, the implementation returns a message $m$ stored at $\CE[\ell_e]$. 
\end{itemize}

In Appendix~\ref{app:secure_VE} we discuss how to securely instantiate VE with several cryptographic schemes, including threshold cryptosystems and verifiable secret sharing. Next, we introduce the complete protocol for BRB-CO.

\subsection{Protocol description}

In Alg.~\ref{alg:byz:causal:1}--\ref{alg:byz:causal:2}, we present an implementation of the BRB-CO protocol by relying on the VE primitive to ensure the \emph{hiding} property. The protocol is designed to be resilient to $f$ Byzantine processes, where $n > 3f$.
The protocol follows Bracha's reliable broadcast~\cite{DBLP:journals/iandc/Bracha87}, and extends it to provide causal ordering for a Byzantine reliable broadcast channel (BRCH)~(Definition~\ref{def:brb:channel}).
We characterize the protocol for one payload message by its phases, or communication rounds: the \str{send} phase, the \str{echo} phase, the \str{ready} phase, the \str{schedule} phase, and the \str{reconstruction} phase. Before the \str{echo} phase, a process \op{ch-acknowledges} the payload, and at the end of the \str{ready} phase, a process \op{ch-schedules} the payload according to Section~\ref{sec:causality_definition}.
We refer to the message a correct process sends in a given round with the name of the round, e.g., in the \str{echo} phase, a process sends an \str{echo} message. The communication is one-to-all in the first round, and all-to-all in the others. The protocol implements a Byzantine reliable broadcast channel (BRCH), which handles the broadcast of multiple messages in parallel.

Each message is associated with a label~$\ell$, and the protocol ensures the causal order of the labels, and the associated messages, across all correct processes according to Definition~\ref{def:byz:causal:relation}.
We first present the data structures and their initializations and then give a detailed description of the protocol.

\paragraph{Data structures and initialization.} 
Every process first calls the setup function of VE with the number of processes~$n$ and reconstruction threshold $f < k \le  \lfloor\frac{n-f}{2}\rfloor + 1$. Recall $n>3f$.

The process also maintains several variables. A counter $lsn$ records how many messages the process has locally broadcast. A vector clock $SV$, namely the \emph{scheduled vector}, records the number of labels per sender this process has locally scheduled. Variables $\var{sentecho}$, $\var{sentready}$, $\var{sentschedule}$, $\var{sentrecon}$, and $\var{delivered}$ maintain the respective sets of labels for which the process has locally sent the corresponding protocol message. The variable $\var{encaps}$ stores the encapsulation data $\encaps$ per label, similarly, $\var{echoes}$, $\var{readys}$, $\var{schedules}$, and $\var{undelivered}$ collect the corresponding protocol messages locally received from every other process $p_j$. The temporary sets $\var{pendingecho}$ and $\var{pendingready}$ store not-yet-valid \str{echo} and \str{ready} messages. Finally, $\var{scheduledqueue}$ maintains scheduled but not-yet-delivered messages. Such data structure implements a first-in first-out queue operated though a $\op{enqueue}(v)$ function to add any new element $v$, and a $\op{dequeue}()\rightarrow v$ function to remove the oldest added element. Additionally, $\op{init}()$ initializes the queue as empty and $\op{head}() \rightarrow v$ returns the first element without removing it.

\paragraph{Description of the protocol.}
For every label~$\ell$, the first part of the protocol follows Bracha's reliable broadcast with few modifications ensuring that eventually all correct processes schedule the same label~$\ell$ respecting its potential causal dependencies. Scheduling a label corresponds to delivering the payload message in the original protocol, but BRB-CO adds more phases. The causal dependencies are estimated locally during the execution of the first phases according to the view of multiple processes. To this end, each process locally records in the scheduled vector (SV) how many labels from different senders it has locally scheduled, and sends it along with every new \str{echo} message. The second part allows correct processes to coordinate and jointly recover the associated message $m$.

More precisely, an instance of the protocol consists of the following. A sender process $p_s$ encapsulates $m$ using VE and disseminates it to all processes in a \str{send} message to each process. This contains two values regarding $m$: the encapsulation data~$\encaps_i$, and the associated encapsulation identifier~$\ell_e$. The encapsulation data serves two purposes: it allows every process to validate the identifier~$\ell_e$ before sending an \str{echo} message, and later, if valid, to extract a share~$\share_i$ and to reconstruct the message~$m$. Additionally, \str{send} also contains the locally assigned sequence number~$sn$.

The first modification to Bracha's protocol is that a process validates $\ell_e$ before it sends \str{echo}: if $\ell_e$ is valid, the process assigns to the broadcast instance the label $\ell= (p_s, sn, \ell_e)$, which consists of the sender process $p_s$, the sequence number $sn$, and the encapsulation identifier $\ell_e$. The process also adds a vector clock $W$ to its \str{echo} message, which equals the local value of $SV$. When sending \str{echo} the process also emits the event $\op{brbco-acknowledge}(\ell)$.

The second modification consists of introducing a condition for sending the \str{ready} message; we refer to this condition as \emph{causal barrier}, and it is applied in two places. These two causal barriers prevent a correct process from sending a \str{ready} message for a label~$\ell$ that potentially causally depends on some other label~$\ell'$ before it has locally scheduled $\ell'$. Recall that every label uniquely identifies some associated message $m$, hence, whenever we talk about causal dependencies of labels, we mean causal dependencies of the associated messages.

A correct process sends a \str{ready} message in two different places in the algorithm. The first is upon receiving a Byzantine quorum of \str{echo} messages. The new condition considers an \str{echo} message from a process $p_j$ only valid when the vector clock $W_j$ contained in the message is less than or equal to the local variable~$SV$, i.e., when $W_j \le SV$; this is the \emph{\str{echo} causal barrier}. Its presence implies that the process receiving \str{echo} has locally already scheduled at least the same number of labels per sender as $p_j$ before it proceeds to sending~\str{ready}. Notably, the labels from one correct sender process follow a contiguous, increasing order due to the presence of the sequence number~$sn$ in the label. If an \str{echo} message is not valid yet, it is inserted into $\var{pendingecho}$ until it becomes valid.

The second place where a process can send a \str{ready} message is upon receiving more than $f$ \str{ready} messages at a point in time when it has not yet obtained a Byzantine quorum of \str{echo} messages. A similar condition as in the \str{echo} causal barrier should apply here, and to this effect, every process attaches a vector clock also to its \str{ready} messages.

However, this vector clock cannot be the current variable $SV$ as attached to \str{echo}. When sending \str{ready}, a process should report all potential dependencies, but it is possible that this process is slow and has not yet observed the dependencies a Byzantine quorum in the network has.
Therefore, a process attaches to such a \str{ready} message the \emph{component-wise maximum} $W^{\text{max}}$ of the vectors from the messages that triggered the sending, i.e., either the quorum of \str{echo} messages or the set of more than $f$ \str{ready} messages. 
A process receiving a \str{ready} message from a process $p_k$ considers it valid only after the so-called \str{ready} \emph{causal barrier}, i.e., the vector $W^{\text{max}}_k$ in \str{ready} is less than or equal to the local $SV$.  
If a \str{ready} message is not valid yet, it is inserted into $\var{pendingready}$ until it becomes valid. 

The component-wise maximum ensures that if a label~$\ell$ causally depends on another label $\ell'$, then in every possible Byzantine quorum, there is at least one correct process that sent an \str{echo} message for $\ell$ after scheduling $\ell'$. Consequently, every correct process would send a \str{ready} message for $\ell$ only after scheduling $\ell'$. Moreover, a correct process checks the two causal barriers only until it sends the first \str{ready} message for $\ell$.

Once the process has then received more than $2f$ \str{ready} messages for a label~$\ell$, it proceeds to \emph{scheduling}~$\ell$. This means that it increments $SV$ accordingly (at the index of the process that sent $\ell$), inserts $\ell$ into $\var{scheduledqueue}$, sends a \str{schedule} for~$\ell$, and also emits the $\op{brbco-schedule}(\ell)$ event. This will ensure that the delivery order later follows the scheduling order. 

The last two rounds of the protocol coordinate the processes to extract and release their shares of the message. In particular, a correct process $p_i$, upon receiving more than $\frac{n+f}{2}$ \str{schedule} messages, extracts its share $\share_i$ from $\encaps_i$ (using VE), provided it initially received $\encaps_i$, and broadcasts it to all processes. If $p_i$ has not previously received $\encaps_i$, it simply does not contribute to the reconstruction.
Eventually, every correct process schedules a label~$\ell$ and thus receives more than $\frac{n+f}{2}$ \str{schedule} messages. Therefore, it will receive at least $k$ valid shares from correct processes, which is sufficient to reconstruct the message~$m$ using VE. Recall $f < k \le  \lfloor\frac{n-f}{2}\rfloor + 1$.

The delivery then follows the ordering defined by the scheduling queue.

\begin{algo*}[htbp]
\vbox{
\small
\begin{numbertabbing}\reset
  xxxx\=xxxx\=xxxx\=xxxx\=xxxx\=xxxx\=MMMMMMMMMMMMMMMMMMM\=\kill
  \textbf{State} \label{}\\
  \> $pp \gets \setup{n, k}$ \`//VE functionality setup\label{}\\
  \> $\var{lsn} \gets 0$ \`//local sequence number of broadcast messages\label{}\\
 \> $\var{SV} \gets [0]^n$ \`//vector clock for SCHEDULE messages\label{}\\
  \> $\var{sentecho} \gets \emptyset$\`//keeps the set of labels for which \str{echo} was sent\label{}\\
  \> $\var{sentready} \gets \emptyset$\`//keeps the set of labels for which \str{ready} was sent\label{}\\
  \> $\var{sentschedule} \gets \emptyset$\`//keeps the set of labels for which \str{schedule} was sent\label{}\\
   \> $\var{sentrecon} \gets \emptyset$\`//keeps the set of labels for which \str{reconstruct} was sent\label{}\\
  \> $\var{delivered} \gets \emptyset$\`//keeps the set of labels that have been delivered\label{}\\
  \> $\var{encaps} \gets \emptyset$\`//for each label, collects the valid encapsulation data\label{}\\
  \> $\var{echoes} \gets [\emptyset]^n$\`//collects the valid \str{echo} messages from every other processes\label{}\\
  \> $\var{readys} \gets [\emptyset]^n$\`//collects the valid \str{ready} messages from other processes\label{}\\
  \> $\var{schedules} \gets [\emptyset]^n$\`//collects the valid \str{schedule} messages from other processes\label{}\\
  \> $\var{undelivered} \gets [\emptyset]^n$\`//collects the valid shares from other processes\label{}\\
  \> $\var{pendingecho} \gets \emptyset$\`//collects  \str{echo} messages\label{}\\
  \> $\var{pendingready} \gets \emptyset$\`//collects not-yet-valid \str{ready} messages\label{}\\
  \> $\var{scheduledqueue}.\op{init()}$\`//stores scheduled label in FIFO order\label{}\\
  \\
  \textbf{upon invocation} $\op{brbco-broadcast}(m)$ \textbf{do} \` //only the sender $p_s$\label{}\\
  \> $(\ell_e, [\encaps_1, \dots, \encaps_n]) \gets \encapsulate{m}$ \label{}\\
  \> send message \msg{send}{lsn, \ell_e, \encaps_j} to all $p_j \in \CP$ \label{}\\
  \> $\var{lsn} \gets \var{lsn} + 1$ \label{} \\
  \\
  \textbf{upon} receiving a message \msg{send}{sn, \ell_e, \encaps_i} from $p_s \in \CP$ $\wedge$  $sn \ge 0, \ell_e \not= \bot $\label{}\\
    \textbf{such that} $ \ell = (p_s, sn, \ell_e) \notin \var{sentecho} \wedge \validate{p_i}{\ell_e, \encaps_i}$ \textbf{do}  \\
  \>$\ell \gets (p_s, sn, \ell_e)$ \label{}\\
  \>$\var{sentecho} \gets \var{sentecho} \cup \{ \ell \} $ \label{}\\
  \>$\var{encaps}[\ell] \gets \encaps_i $\label{}\\
   \> \textbf{output} $\op{brbco-acknowledge}(\ell)$ \label{line:acknowledge}\\
  \>send message \msg{echo}{\ell, \var{SV}} to all $p_j \in \CP$ \label{}\\
  \\
  \textbf{upon} receiving a message \msg{echo}{\ell, W} from $p_j$ \textbf{do} \label{}\\
  \> $\var{pendingecho} \gets \var{pendingecho} \cup \{(\ell, p_j, W)\}$ \label{} \\
  \\
  \textbf{upon exists} $(p_j',\ell', W') \in \var{pendingecho}$ 
  \textbf{such that} $W' \le SV$ \textbf{do} \` //\str{echo} causal barrier \label{line:causal_barrier_echo} \\
      \> $\var{pendingecho} \gets \var{pendingecho} \setminus \{(p_j', \ell', W')\}$\label{}\\
  \> $\var{echoes}[ p_j'] \gets (\ell', W')$\label{}\\
    \\
  \textbf{upon exists} $p_s, p_j \in \CP,  sn \geq 0, \ell_e \not= \bot$\label{line:echo_quorum}\\
  \textbf{such that}
  $\ell=(p_s, sn, \ell_e) \wedge
  |\{p_j | \var{echoes}[p_j] = (\ell, \cdot) \} | > \frac{n+f}{2} 
  \wedge \ell \notin \var{sentready}
  $ \textbf{do}
   \\
   \>$\var{sentready} \gets \var{sentready} \cup \{ \ell \} $\label{}\\
  \>$ W_m \gets \op{maxvector}\Bigl(\big[W | (\ell, W) \in \var{echoes}[p_j] \big]\Big)$ \label{}\\
   \> send message \msg{ready}{\ell, W_m} to all $p_j \in \CP$ \label{}\\
  \\
  \textbf{upon} receiving a message \msg{ready}{\ell, W_m} from $p_j$ \textbf{do} \label{}\\
  \> $\var{pendingready} \gets \var{pendingready} \cup \{(p_j, \ell, W_m)\}$ \label{}\\
\\
  \textbf{upon exists} $(p_j', \ell', W_m') \in \var{pendingready}$ 
  \textbf{such that} $W_m' \le \var{SV}$ \textbf{do} \` //\str{ready} causal barrier \label{line:causal_barrier_ready} \\
    \> $\var{pendingready} \gets \var{pendingready} \setminus \{(p_j', \ell', W_m')\}$ \label{}\\
  \> $\var{readys}[p_j'] \gets (\ell', W_m')$ \label{}\\
\\
  \textbf{upon exists} $(p_j', \ell', W_m') \in \var{pendingready}$ 
  \textbf{such that} $\ell' \in \var{sentready}$ \textbf{do} \` //\str{ready} sent \label{} \\
    \> $\var{pendingready} \gets \var{pendingready} \setminus \{(p_j', \ell', W_m')\}$ \` //flush the pending set for $\ell'$ \label{}\\
  \> $\var{readys}[p_j'] \gets (\ell', W_m')$ \label{}\\
\\
\end{numbertabbing}
}
\caption{Byzantine Reliable Broadcast with Causal Ordering (BRB-CO, part 1) for $p_i$}
\label{alg:byz:causal:1}
\end{algo*}

\begin{algo*}[htbp]
\vbox{
\small
\begin{numbertabbing}
  xxxx\=xxxx\=xxxx\=xxxx\=xxxx\=xxxx\=MMMMMMMMMMMMMMMMMMM\=\kill
  \textbf{upon exists} $p_s, p_j \in \CP, sn \geq 0, \ell_e \not= \bot$ \label{line:amplification}\\
  \textbf{such that}
  $\ell=(p_s, sn, \ell_e) \wedge |\{p_j | \var{readys}[p_j] = (\ell, \cdot) \} | > f 
  \wedge \ell \notin \var{sentready}
  $ \textbf{do} \\
  \>$\var{sentready} \gets \var{sentready} \cup \{ \ell \} $ \label{}\\
  \>$ W_m \gets \op{maxvector}\Bigl(\big[W_m' | (\ell, W_m') \in \var{readys}[p_s, sn, j] \big] \Bigr)$ \label{}\\
  \> send message \msg{ready}{\ell, W_m} to all $p_j \in \CP$ \label{}\\
  \\
  \textbf{upon exists} $p_s, p_j \in \CP,  sn \geq 0, \ell_e \not= \bot$ \label{line:ready_quorum}\\
  \textbf{such that}
  $\ell = (p_s, sn, \ell_e) \wedge |\{p_j | \var{readys}[p_j] = (\ell, \cdot) \} | > 2f 
  \wedge \ell \notin \var{sentschedule}
  \wedge sn = \var{SV}[p_s]
  $ \textbf{do}\\
   \>$\var{sentschedule} \gets \var{sentschedule} \cup \{ \ell \} $ \label{}\\
   \> send message \msg{schedule}{\ell} to all $p_j \in \CP$ \label{}\\
    \> $\var{SV}[p_s] \gets \var{SV}[p_s] + 1$ \label{}\\
    \> $\var{scheduledqueue}.\op{enqueue}(\ell)$ \label{line:enqueue}\\ 
   \> \textbf{output} $\op{brbco-schedule}(\ell)$ \label{}\\
\\
  \textbf{upon} receiving a message \msg{schedule}{\ell} from $p_j$ \textbf{do} \label{}\\
  \> $\var{schedules}[p_j] \gets \ell$ \label{}\\
  \\
  \textbf{upon exists} $p_s, p_j \in \CP,  sn \geq 0, \ell_e \not= \bot$ \label{line:schedule_quorum}\\
  \textbf{such that} 
  $\ell = (p_s, sn, \ell_e) \wedge |\{p_j | \var{schedules}[p_j] = \ell \} | > \frac{n+f}{2}
   \wedge \ell \notin \var{sentrecon}$ 
  \textbf{do}\\
  \> $\var{sentrecon} \gets \var{sentrecon} \cup \{ \ell \} $ \label{}\\
  \> \textbf{if} $\var{encaps}[\ell] \not= \bot$ \textbf{then} \label{}\\
  \>\>  $\encaps_i \gets \var{encaps}[\ell]$ \label{}\\
  \>\>  $ \share_i \gets \extract{p_i}{\ell, \encaps_i}$ \label{}\\
  \>\>  send message \msg{reconstruct}{\ell, \share_i} to all $p_j \in \CP$ \label{}\\
  \\
  \textbf{upon} receiving a message \msg{reconstruct}{\ell, \share_j} from $p_j \in \CP \wedge$ \textbf{upon exists} $p_s \in \CP,  sn \geq 0, \ell_e \not= \bot$  \label{}\\
  \textbf{such that} 
  $\ell = (p_s, sn, \ell_e) \wedge \verify{p_j, \ell_e, \share_j}$ \textbf{do} \\
  \> $\var{undelivered}[p_j] \gets (\ell, \share_j) $\label{}\\
\\
\textbf{upon exists} $ p_s, p_j \in \CP, sn \geq 0 , \ell_e \not= \bot$ \textbf{such that}
 $ \ell =(p_s, sn, \ell_e)$ \label{line:share_quorum}\\
 $\wedge |\{p_j | \var{undelivered}[p_j] = (\ell, \cdot) \} | \ge k 
  \wedge \ell \notin \var{delivered} \wedge 
 \var{scheduledqueue}.\op{head()} = \ell$
\textbf{do}\\
  \> $m \gets \reconstruct{\ell, \{s_j | \var{undelivered}[p_j] = (\ell, s_j) \}}$ \label{} \\
  \> $\var{delivered} \gets \var{delivered} \cup \{\ell\}$ \label{}\\
  \> $\ell \gets \var{scheduledqueue}.\op{dequeue()}$ \label{line:dequeue}\\
  \> \textbf{output} $\op{brbco-deliver}(\ell, m)$ \label{}\\
  \\
  \textbf{function} $\op{maxvector(list)}$: \` //where $\var{list}$ is a collection of vector clocks $W$ of size $n$\label{}\\
  \> $\var{result} \gets [0]^n$\label{}\\
  \> \textbf{for} $W \in \var{list}$ \textbf{do} \label{}\\
  \> \> \textbf{for} $i \in [1, n]$ \textbf{do} \label{}\\
  \> \> \> $\var{result}[i] \gets \op{max}(\var{result}[i], W[i])$ \label{}\\
  \> \textbf{return} $\var{result}$\label{}\\
  \\
\end{numbertabbing}
}
\caption{Byzantine Reliable Broadcast with Causal Ordering (BRB-CO, part 2) for $p_i$}
\label{alg:byz:causal:2}
\end{algo*}

The correctness of BRB-CO design depends on all its components. In the following, we informally discuss on the necessity of each one. A secure VE instantiation is fundamental to hide the payload of a message among the set of processes and ensure reconstruction later in the protocol even without the participation of the original sender. Definition~\ref{def:byz:causal:relation} allows us to clearly capture the view of correct processes even in the presence of maliciously injected messages that reach the \emph{schedule} phase and that are later delivered.
Vector clocks shared during the \str{echo} round and incremented during the \str{schedule} round at each process, provide a consistent distributed data structure, with reliable increments, tracking possible causal dependencies. The component-wise maximum calculation on a quorum of vector clocks ensures ``real'' causal dependencies always appear in the local view of a correct process before scheduling a label. 
The causal barrier conditions prevent malicious senders from reporting 
false dependencies in such data structures and provide a wrong snapshot of the system global state. Finally, putting all together, the four all-to-all communication rounds bring the protocol to satisfy causal delivery. 

\subsection{Complexity analysis}
We express the bit complexity of the protocol in terms of the size of the message $|m|$, the number of processes $n$, and $\lambda$, which denotes the maximal size of a unique (cryptographic) label for~$m$.  The exact size may depend on the specific VE instantiation. 
W.l.o.g., we assume the size of the encapsulation data $\encaps$ and the size of a share $\share$ of VE are $O(|m|)$. Future work may explore more efficient VE instantiations that reduce the size of $\encaps$ and $\share$ by leveraging erasure codes and similar techniques.

BRB-CO runs in five communication rounds. In the first round, the sender broadcasts a single message to all other processes; in every subsequent round, each process sends a message to every other process, yielding $O(n^2)$ messages per round and $O(n^2)$ overall, since the number of rounds is constant. The size of the messages exchanged is $O(n|m| + \lambda)$ in the first round, $O(\lambda + n)$ in the second and third rounds due to the inclusion of vector clocks, $O(\lambda)$ in the fourth round, and $O(|m|+\lambda)$ in the fifth round. Combining message and bit complexity across rounds, the overall communication complexity of the protocol is $O(n^2(|m| + \lambda + n))$. Notably, an $\Omega(n)$ lower bound on vector clock size is an unavoidable cost of capturing causality and concurrency information in a distributed system~\cite[Sec.~6.1.3.1]{DBLP:books/daglib/0017536}.

\section{Analysis}
\label{sec:analysis}

In this section, we provide a formal analysis of the proposed protocol and ultimately prove it implements a Byzantine reliable broadcast with causal ordering (BRB-CO). 
First, we prove \emph{agreement} on the label~$\ell$.

\begin{lemma}[Label agreement]
  If a correct process $p_i$ \op{brbco-schedules} a label~$\ell$, then all correct processes eventually \op{brbco-schedule} $\ell$.
  \label{lemma:schedule_agreement}
\end{lemma}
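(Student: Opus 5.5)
The plan is to follow the totality argument of Bracha's protocol. The new ingredients are the two causal barriers and the sequence-number condition $sn = \var{SV}[p_s]$ on scheduling, which can each stall a correct process. So I must show that every pending \str{ready} message sent by a correct process eventually passes the \str{ready} causal barrier at every correct process. The proof will therefore be an induction on the causal depth of labels, measured by the vector clocks, rather than a single-instance argument.

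First, the Bracha part. Suppose a correct $p_i$ schedules $\ell=(p_s,sn,\ell_e)$. By line~\ref{line:ready_quorum}, $p_i$ holds more than $2f$ \str{ready} messages for $\ell$ in $\var{readys}$, so more than $f$ of them come from correct processes. A correct process sends \str{ready} with a vector $W_m$ that is the maximum of vectors it accepted. It accepted those vectors only after they satisfied a barrier against its own $\var{SV}$, or they were the vectors of \str{ready} messages it had itself accepted. Hence $W_m \le \var{SV}_q$ at the time of sending, for the correct sender $q$.

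The key invariant is this: whenever a correct process $q$ has $\var{SV}_q[p] = c$, the labels of $p$ it scheduled are exactly those with sequence numbers $0,\dots,c-1$. By induction on the statement of the lemma itself, ordered by $\sum_p \var{SV}$ entries, each of these labels is eventually scheduled by every correct process. Consistency of Bracha's protocol, from quorum intersection of echoes, gives that the label scheduled at each position $(p,c')$ is the same at all correct processes. Therefore every correct $p_k$ eventually reaches $\var{SV}_k \ge W_m$, and all these \str{ready} messages from correct processes pass the \str{ready} barrier at $p_k$.

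So each correct $p_k$ eventually accepts more than $f$ \str{ready} messages for $\ell$. If it has not yet sent \str{ready}, it triggers amplification at line~\ref{line:amplification}. All correct processes then send \str{ready}, and the vectors they send are again bounded by some correct $\var{SV}$, so by the same argument they pass the barrier everywhere. Each correct process thus collects at least $n-f>2f$ \str{ready} messages. Note that the flush rule only helps here, since once $\ell\in\var{sentready}$ pending \str{ready}s are accepted without the barrier.

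Finally, the guard $sn=\var{SV}_k[p_s]$ must hold. Since $p_i$ scheduled $\ell$ with $sn=\var{SV}_i[p_s]$, it had already scheduled the labels of $p_s$ with sequence numbers $0,\dots,sn-1$. By the induction hypothesis all correct processes eventually schedule those, so $\var{SV}_k[p_s]$ reaches $sn$. It cannot exceed $sn$ without $p_k$ scheduling a label with number $sn$. That label must equal $\ell$, by the uniqueness of a label per position: two \str{ready}-quorums for different labels at the same position would require a correct process to echo both, which is impossible since a correct sender, or a Byzantine sender with correct processes echoing only once per $(p_s,sn)$, gives one label per position.

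The main obstacle is making this induction well-founded while avoiding circularity. The dependencies in $W_m$ refer to labels scheduled strictly earlier in real time at some correct process, so I will order scheduling events by their first occurrence at any correct process and induct on that order.

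I also need to check the uniqueness-per-position claim. As written, the echo rule only avoids repeating the same full label $\ell$. So I may need to argue uniqueness via the sequence-number guard, namely that only one label per $(p_s,sn)$ can be scheduled at each correct process, and that all correct processes agree on it by quorum intersection of \str{ready} sets traced back to echo quorums. This is where I expect the most care is needed.
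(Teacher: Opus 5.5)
Your skeleton matches the paper's induction. The $>f$ correct \str{ready}s behind a schedule carry vectors bounded by their senders' $\var{SV}$; by the hypothesis, the labels counted there are eventually scheduled everywhere; so the barrier opens, amplification fires, and $>2f$ \str{ready}s accumulate. You are also more careful than the paper. Its proof never addresses the guard $sn=\var{SV}[p_s]$, and so never addresses which label each correct process schedules at a position $(p_s,sn)$.

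However, the step you leave open is a genuine gap, and for the protocol as written it cannot be closed. Your first justification is correct exactly under the condition you name, echoing once per $(p_s,sn)$, but the pseudo-code does not enforce it. Your fallback fails for two reasons. The guard only gives at most one label per position at each process, not agreement across processes. And quorum intersection of \str{echo} or \str{ready} sets proves nothing, because $\var{sentecho}$ and $\var{sentready}$ are keyed by the full label $(p_s,sn,\ell_e)$.

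Here is a concrete failure. Let $n=4$, $f=1$, and let the Byzantine $p_4$ invoke \op{VE.Encapsulate} twice and send \str{send} messages with $sn=0$ for two identifiers $\ell_e\neq\ell_e'$.
\begin{itemize}
\item Every correct process validates and echoes both $\ell=(p_4,0,\ell_e)$ and $\ell'=(p_4,0,\ell_e')$ with vector $[0]^4$, obtains echo quorums for both, and sends \str{ready} for both.
\item The network delivers the \str{ready}s for $\ell$ first to $p_1$ and those for $\ell'$ first to $p_2$.
\item Then $p_1$ schedules $\ell$ and $p_2$ schedules $\ell'$, and both now have $\var{SV}[p_4]=1$.
\item The guard blocks the other label forever, so $p_2$ never schedules $\ell$ and the lemma fails.
\end{itemize}

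The lemma therefore needs a protocol amendment: a correct process echoes at most one label per $(p_s,sn)$. With it, your first argument works.
\begin{itemize}
\item Two echo quorums of size more than $\frac{n+f}{2}$ share more than $f$ processes, hence a correct one.
\item The first \str{ready} any correct process sends for a label is triggered by an echo quorum.
\item So all correct \str{ready}s at one position carry the same label, and scheduling needs more than $f$ of them.
\end{itemize}

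One smaller repair. You claim that the \str{ready}s sent after amplification pass the barrier everywhere ``by the same argument''. Your induction on the first scheduling time $t$ of $\ell$ does not cover this. Those messages may be sent after $t$, and their vectors can count labels first scheduled after $t$, to which the hypothesis does not apply. Use only the $>f$ correct \str{ready}s that $p_i$ counted. They were sent before $t$, so the hypothesis applies to their vectors, and every correct process eventually accepts them and puts $\ell$ into $\var{sentready}$. From then on the flush rule admits the remaining $\ge n-f>2f$ \str{ready}s with no barrier.
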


\begin{proof}[Proof by induction]
  (\emph{Base case}): Let us assume all correct processes in the network have not yet \op{brbco-scheduled} any label, i.e., $\var{SV} = [0]^n$ at every correct process. Let us consider a correct process~$p_i$ who \op{brbco-schedules} a label~$\ell$. Following the protocol, $p_i$ received more than $2f$ valid \str{ready} messages for label~$\ell$, and among those messages, more than $f$ are from correct processes. Thus, 
  every correct process~$p_j$ who sent a valid \str{ready} message for $\ell$ attached to it a max vector $W_j$ that is less than or equal to the local $\var{SV}$, i.e., $W_j \le [0]^n$.
  Moreover, for the amplification step (line~\ref{line:amplification}) every correct process~$p_k$, who has not sent \str{ready}, will eventually receive more than $f$ valid \str{ready} messages for label~$\ell$, and send a \str{ready} message for $\ell$ and compute a max vector $W_k \le \var{SV} \le [0]^n$. Hence, every correct process eventually collects more than $2f$ valid \str{ready} messages for label~$\ell$ and thus \op{brbco-schedules}~$\ell$.
  (\emph{Induction step}): Let us assume a set of labels $\Lambda$ such that, if a correct process has \op{brbco-scheduled} all labels $\ell \in \Lambda$, then all correct processes eventually \op{brbco-schedule} all $\ell \in \Lambda$. Therefore, the vector clock~$\var{SV}$ at every correct process is eventually reporting the same number of scheduled labels for each sender; let us denote this vector $\var{SV}_\Lambda$. We want to show that, if a correct process \op{brbco-schedules} a new label $\ell' \notin \Lambda$, then all correct processes eventually schedule $\ell'$. In fact, the same reasoning as in the base case applies: a correct process~$p_i$ scheduling $\ell'$ has received more than $f$ \str{ready} messages from correct processes with max vectors $W_j$ less than or equal to the local $\var{SV} \le \var{SV}_\Lambda$, which implies also $W_j \le \var{SV}_\Lambda$. Again, for the amplification step, every correct process~$p_k$, who has not sent \str{ready}, will eventually receive more than $f$ valid \str{ready} messages for label~$\ell'$, and thus send a \str{ready} message for $\ell'$ and max vector $W_k \le \var{SV}_\Lambda$. Hence, every correct process eventually collects more than $2f$ valid \str{ready} messages for $\ell'$ and thus \op{brbco-schedules}~$\ell'$.
\end{proof}

Given that the agreement property inherently implies reliable updates to the scheduled vector, the next lemma proves that the order in which labels are scheduled at every correct process preserves causal ordering. We first prove the lemma and then describe a possible execution. 

\begin{lemma}[Causal schedule]
  \label{lemma:schedule_causality}
  If $(\ell_1, m_1) \prec_B (\ell_2, m_2)$, then no correct process $\op{brbco-schedules}$ $\ell_2$ before $\ell_1$.
\end{lemma}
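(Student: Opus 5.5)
We argue by structural induction on the derivation of $(\ell_1,m_1)\prec_B(\ell_2,m_2)$ following the three clauses of Definition~\ref{def:byz:causal:relation}. The transitive clause is immediate: if no correct process schedules $\ell'$ before $\ell_1$ and none schedules $\ell_2$ before $\ell'$, then none schedules $\ell_2$ before $\ell_1$. The same-sender clause follows from the guard $sn=\var{SV}[p_s]$ in line~\ref{line:ready_quorum}. A correct process schedules the labels of sender $p_s$ strictly in increasing sequence number, one per value of $sn$, since $\var{SV}[p_s]$ is incremented after each schedule. Correct processes schedule the same label for each slot, by Lemma~\ref{lemma:schedule_agreement} together with the usual quorum-intersection consistency argument of Bracha's protocol. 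Hence the order in which any correct process schedules $p_s$'s labels equals the order in which it delivers them, which is the $sn$ order (and the broadcast order if $p_s$ is correct). So the label delivered first in clause~1 is scheduled first everywhere.

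The substantive case is the schedule order $\ell_1\stackrel{SO}{\to}\ell_2$, i.e.\ $\CQ\CS(\ell_1)$ precedes $\CC(\ell_2)$, with $\ell_1=(p_a,sn_1,\cdot)$. The plan is a counting argument.

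\textbf{Step 1.} Let $A$ be the set of more than $\frac{n-f}{2}$ correct processes that have scheduled $\ell_1$ at $\CQ\CS(\ell_1)$. Each of them then has $\var{SV}[p_a]\ge sn_1+1$, and $\var{SV}$ never decreases.

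\textbf{Step 2.} No correct process acknowledges $\ell_2$ before $\CC(\ell_2)$. Hence every \str{echo} for $\ell_2$ sent by a process in $A$ carries $W[p_a]\ge sn_1+1$.

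\textbf{Step 3.} Consider the first correct process $p$ that sends \str{ready} for $\ell_2$. It must have done so via line~\ref{line:echo_quorum}, because fewer than $f+1$ \str{ready} messages could come from faulty processes. So $p$ holds more than $\frac{n+f}{2}$ valid \str{echo}s for $\ell_2$. Any such set intersects $A$, because
\[ \tfrac{n+f}{2}+\tfrac{n-f}{2}=n \]
and both counts are strict. Hence $W_m[p_a]\ge sn_1+1$ in $p$'s \str{ready}.

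\textbf{Step 4.} By induction on the order of \str{ready} sending, every correct \str{ready} for $\ell_2$ carries $W_m[p_a]\ge sn_1+1$. An amplification \str{ready} takes the max over more than $f$ received \str{ready}s, at least one of them correct.

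\textbf{Step 5.} A correct process scheduling $\ell_2$ needs more than $2f$ \str{ready}s, at least one from a correct sender. This \str{ready} passed the \str{ready} causal barrier (line~\ref{line:causal_barrier_ready}) or was flushed after the process itself sent \str{ready}. In either case, either $\var{SV}[p_a]\ge sn_1+1$ locally or the process's own \str{ready} was sent only after its \str{echo} quorum, or correct \str{ready}s, passed a barrier with that bound. Therefore its $\var{SV}[p_a]\ge sn_1+1$, so it has already scheduled the label in slot $sn_1$ of $p_a$, which by consistency is $\ell_1$.

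The main obstacle is Steps 3--5. First, the intersection with $A$ must count only echoes validated by the barrier; a faulty process's vector is irrelevant, since we only need one correct member of $A$. Second, the flushing rule for pending \str{ready} messages bypasses the barrier. For it I would argue that the process's own first \str{ready} for $\ell_2$ was emitted only after accepting a correct echo or \str{ready} whose vector dominated $sn_1+1$ at $p_a$. The subtlety is that acceptance requires $W\le\var{SV}$, which I will try to use directly to transfer the bound to the local $\var{SV}$. The worry is that processes in $A$ might have sent their \str{echo} for $\ell_2$ before scheduling $\ell_1$. This is excluded precisely because they schedule before $\CQ\CS(\ell_1)$, which precedes $\CC(\ell_2)$.
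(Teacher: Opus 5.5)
Your proposal is correct and takes the same route as the paper. The same-sender case goes through the guard $sn=\var{SV}[p_s]$ plus label agreement. In the schedule-order case you intersect an \str{echo} quorum for $\ell_2$ with the more than $\frac{n-f}{2}$ correct processes that had scheduled $\ell_1$ by $\CQ\CS(\ell_1)$, carry the bound through \op{maxvector} and the two causal barriers, and handle amplification by noting that the first correct \str{ready} must come from an \str{echo} quorum. You also fill in points the paper leaves implicit: the transitivity clause, an induction over the order of \str{ready} messages (which covers amplification \str{ready}s other than the first correct one), and the flush rule at the final $>2f$ \str{ready} step.
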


\begin{proof}
  From Definition~\ref{def:byz:causal:relation}, we assume that there exists a correct process~$p_i$ that \op{brbco-delivers} $(\ell_1, m_1)$ and $(\ell_2, m_2)$. For each label, we will omit to report the corresponding message in the rest of the proof.
  Following the protocol, $p_i$ has previously \op{brbco-scheduled}~$\ell_1$ and $\ell_2$. Moreover, $p_i$ \op{brbco-delivers} the labels dequeued from the local queue \emph{scheduledqueue}, which implies that $p_i$ has \op{brbco-delivered} either $\ell_1$ before $\ell_2$ or $\ell_2$ before $\ell_1$, and consequently \op{brbco-scheduled} them in the same order.
  We want to show that every correct process \op{brbco-schedules} $\ell_1$ before $\ell_2$. We distinguish between two cases: (1) $\ell_1$ and $\ell_2$ are from the same sender, or (2) from different senders.

  (1) For the same sender the order in the protocol is defined by the sequence numbers. Again, from the first condition in Definition~\ref{def:byz:causal:relation}, we know that there exists a correct process that has \op{brbco-delivered} $\ell_1$ before $\ell_2$. This means that the sequence number attached to $\ell_1$ is smaller than the one attached to $\ell_2$. Following the protocol, the sequence number is part of the label, i.e., $\ell = (p_s, sn, \ell_e)$, and a correct process can only \op{brbco-schedule} a label~$\ell$ with sequence number $sn$ if it has already \op{brbco-scheduled} all labels from the same sender with sequence number smaller than $sn$~(line~\ref{line:ready_quorum}). Hence, by label agreement (Lemma~\ref{lemma:schedule_agreement}), every correct process \op{brbco-schedules} $\ell_1$ before $\ell_2$.

  (2) For different senders the schedule order holds, $\ell_1\stackrel{SO}{\to}\ell_2$, and thus $\text{QS}(\ell_1)$ precedes $\CC(\ell_2)$. 
  The commit event $\CC(\ell_2)$ is defined as the first time a correct process \op{brbco-acknowledges}~$\ell_2$. From the protocol, this happens before the same correct process sends an \str{echo} message for $\ell_2$~(line \ref{line:acknowledge}). 
  Let us consider a correct process $p_k \not= p_i$ who sends a \str{ready} message for $\ell_2$, after receiving more than $\frac{n+f}{2}$ \str{echo} messages for $\ell_2$. A vector clock is attached to each \str{echo}, and among such messages, more than $\frac{n-f}{2}$ are from correct processes. From the schedule order assumption we also know that there are more than $\frac{n-f}{2}$ correct processes that already have \op{brbco-scheduled}~$\ell_1$. Hence, there is at least one correct process $p_j$ that has already \op{brbco-scheduled}~$\ell_1$ and sent an \str{echo} message for $\ell_2$ to $p_k$ with vector clock $W_j$ 
  that has already incremented in the position indexed by the sender of $\ell_1$. 
  More precisely, $W_j[\text{sender}(\ell_1)] \ge \text{sn}(\ell_1) + 1$, where $\text{sender}(\ell_1)$ and $\text{sn}(\ell_1)$ extracts from label~$\ell_1$ the corresponding sender and sequence number, respectively. Moreover, because $p_k$ accepted this \str{echo} from $p_j$, it holds $W_j \le SV_k$, where $SV_k$ is the value of $SV$ at $p_k$. Then $p_k$ must have already \op{brbco-scheduled}~$\ell_1$.
  Notably, this also means that the component-wise maximum vector~$W^{\text{max}}_k$ computed by $p_k$ and attached to \str{ready} will already count $\ell_1$, i.e., $W^\text{max}_k[\text{sender}(\ell_1)] \ge \text{sn}(\ell_1) + 1$, therefore reporting it as a possible dependency.
  We apply the same reasoning to every process sending \str{ready} after a quorum of \str{echo}.
  A similar reasoning holds for a process~$p_q$ sending a \str{ready} in the amplification step, i.e., after receiving more than $f$ \str{ready}. At least one of the \str{ready} messages is sent by a correct process, and, as is clear from the structure of the protocol, and shown formally in many analyses of the Bracha broadcast protocol~\cite{DBLP:books/daglib/0025983}, the first \str{ready} message ever sent by a correct process $p_x$ was sent in response to receiving a quorum of \str{echo}. From before we know that, then $p_x$ will count $\ell_1$ in its \str{ready} message, consequently, if a correct process $p_q$ accepts such a \str{ready} as valid, it means that $W^\text{max}_x \le SV_q$, thus $p_q$ has already \op{brbco-scheduled}~$\ell_1$.
\end{proof}

\begin{figure}[hbtp]
  \resizebox{\columnwidth}{!}{
    \centering
  \begin{tikzpicture}[>=Stealth,thick]
    
    \draw (0,0)  node[left]{$P_1$} -- (15,0);
    \draw (0,-1) node[left]{$P_2$} -- (15,-1);
    \draw (0,-2) node[left]{$\bar{P_3}$} -- (15,-2);
    \draw (0,-3) node[left]{$P_4$} -- (15,-3);
    \draw[->] (0,-4.5) node[left]{$t$} -- (15,-4.5);

    \draw[black] (1,0) circle (3pt) node[above]{$(\ell_1, [0,0,0,0])$};
    \draw[black] (3.5,0) circle (3pt) node[above]{$(\ell_2, [0,0,0,0])$};

    \filldraw[black] (5.5,0) circle (3pt) node[above]{$\ell_2$};
    \filldraw[black] (7,0) circle (3pt) node[above]{$\ell_1$};

    \draw[black] (10.5,0) circle (3pt) node[above]{$(\ell_3, [1,1,0,0])$};

    \filldraw[black] (13.5,0) circle (3pt) node[above]{$\ell_3$};

    \draw[black] (2,-1) circle (3pt) node[above]{$(\ell_2, [0,0,0,0])$};
    \draw[black] (4.5,-1) circle (3pt) node[above]{$(\ell_1, [0,0,0,0])$};

    \filldraw[black] (6.5,-1) circle (3pt) node[above]{$\ell_1$};
    \filldraw[black] (8.5,-1) circle (3pt) node[above]{$\ell_2$};

    \draw[black] (10.5,-1) circle (3pt) node[above]{$(\ell_3, [1,1,0,0])$};

    \filldraw[black] (13,-1) circle (3pt) node[above]{$\ell_3$};

    \draw[black] (2.5,-3) circle (3pt) node[above]{$(\ell_2, [0,0,0,0])$};
    \draw[black] (5,-3) circle (3pt) node[above]{$(\ell_1, [0,0,0,0])$};
    \draw[black] (10,-3) circle (3pt) node[above]{$(\ell_3, [0,0,0,0])$};

    \filldraw[black] (11.5,-3) circle (3pt) node[above]{$\ell_1$};
    \filldraw[black] (12.5,-3) circle (3pt) node[above]{$\ell_2$};
    \filldraw[black] (13.5,-3) circle (3pt) node[above]{$\ell_3$};
    
    \draw[black] (1,-4.5) circle (3pt) node[above]{$\;\;\CC(\ell_1)$};
    \draw[black] (2,-4.5) circle (3pt) node[above]{$\;\;\CC(\ell_2)$};
    \filldraw[black] (8.5,-4.5) circle (3pt) node[above]{$\CQ\CS(\ell_2)$};
    \filldraw[black] (7,-4.5) circle (3pt) node[above]{$\CQ\CS(\ell_1)$};
    \draw[black] (10,-4.5) circle (3pt) node[above]{$\;\;\CC(\ell_3)$};
    \draw[-, dashed] (1,0) -- (1,-4.5);
    \draw[-, dashed] (2,-1) -- (2,-4.5);
    \draw[-, dashed] (7,0) -- (7,-4.5);
    \draw[-, dashed] (8.5,-1) -- (8.5,-4.5);
    \draw[-, dashed] (10,-3) -- (10,-4.5);

  \end{tikzpicture}
  }
  \caption{Execution for $\ell_1$, $\ell_2$ and $\ell_3$, where $\ell_1$ and $\ell_2$ are concurrent, and $\ell_3$ causally depends on both $\ell_1$ and $\ell_2$. We show on the temporal line only the global events relevant to observe the causal precedence of $\ell_3$ from $\ell_1$ and $\ell_2$. Notably, $P_4$ that is slightly behind schedules first $\ell_1$ and $\ell_2$, and only after $\ell_3$.}
  \label{fig:exec}
\end{figure}

In Fig.~\ref{fig:exec}, we illustrate an execution of the protocol with four processes, where the adversary controls process~$\bar{P_3}$. W.l.o.g. we represent only $\op{brbco-acknowledge($\ell$)}$ and $\op{brbco-schedule($\ell$)}$ events. The former are denoted by an empty circle, and they occur at every correct process when it sends an \str{echo} message; this message contains the label and the local scheduled vector $\var{SV}$, which we explicitly depict. The latter events, \op{brbco-schedule}, are denoted by a filled circle, they occur at every correct process, and they report just the label. 
As we can observe from the projection on the temporal axis, the messages with labels $\ell_1$ and $\ell_2$ are concurrent, while the one with $\ell_3$ causally depends on both $\ell_1$ and $\ell_2$ following Definition~\ref{def:byz:causal:relation} as both $\CQ\CS(\ell_1)$ and $\CQ\CS(\ell_2)$ precede $\CC(\ell_3)$. Looking at the execution trace, we see that all the correct processes respect the causal relation at delivery. In particular, $P_4$ is the first process emitting $\op{brbco-acknowledge($\ell_3$)}$, however, it will not send a ready for $\ell_3$, until it locally schedules $\ell_1$ and $\ell_2$, reported by both $P_1$ and $P_2$ as possible dependencies, effectively delaying the schedule of $\ell_3$. Notably, even if the faulty $\bar{P_3}$ would inject protocol messages, process~$P_4$ still waits for a Byzantine quorum of \str{echo}, including either the messages of $P_1$ or $P_2$, which include in the attached vector clock the information about $\ell_1$ and $\ell_2$, thus respecting the causal order.

In the following, we show that if a correct process schedules a label~$\ell$, there are enough processes that validated $\ell$, acknowledged it, and are able to reconstruct the associated message $m$.

\begin{lemma}
  \label{lemma:ackn}
  If a correct process $p_i$ \op{brbco-schedules} a label~$\ell$, more than $\frac{n-f}{2}$ correct processes have \op{brbco-acknowledged} the same label~$\ell$.
\end{lemma}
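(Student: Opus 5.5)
The plan is to trace the chain of \str{ready} messages back to a quorum of \str{echo} messages, and then count how many of those echoes come from correct processes. This is the standard Bracha-style argument.

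\emph{Step 1: reach a correct \str{ready}.} Since $p_i$ \op{brbco-schedules} $\ell$ (line~\ref{line:ready_quorum}), it has collected more than $2f$ \str{ready} messages for $\ell$ in $\var{readys}$. At most $f$ of these come from Byzantine processes, so at least one correct process sent \str{ready} for $\ell$.

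\emph{Step 2: reach the first correct \str{ready}.} A correct process sends \str{ready} either on an \str{echo} quorum (line~\ref{line:echo_quorum}) or by amplification (line~\ref{line:amplification}). Amplification requires more than $f$ \str{ready} messages, hence at least one from a correct process. So, as already used in the proof of Lemma~\ref{lemma:schedule_causality}, I consider the temporally first correct process $p_x$ that sent \str{ready} for $\ell$. It cannot have sent it by amplification, because that would require an earlier correct \str{ready}. Hence $p_x$ sent \str{ready} after its $\var{echoes}$ contained \str{echo} messages for $\ell$ from more than $\frac{n+f}{2}$ distinct processes.

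\emph{Step 3: count correct echoes.} Among these more than $\frac{n+f}{2}$ senders, at most $f$ are Byzantine. Therefore more than $\frac{n+f}{2}-f=\frac{n-f}{2}$ of them are correct. A correct process sends \msg{echo}{\ell, SV} only in the \str{send} handler, and there it executes \textbf{output} $\op{brbco-acknowledge}(\ell)$ (line~\ref{line:acknowledge}) immediately before sending. Channels are authenticated, so an \str{echo} attributed to a correct $p_j$ was really sent by $p_j$. Hence each of these correct processes has \op{brbco-acknowledged} $\ell$, which gives the claim.

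\emph{Main obstacle.} The delicate point is identifying the echoes with the label $\ell$ itself, not just some label with the same $(p_s,sn)$. I will handle it by noting that the quorum condition and $\var{echoes}[p_j]=(\ell,\cdot)$ refer to the full triple $\ell=(p_s,sn,\ell_e)$. A correct echoer emits its acknowledge event for exactly that triple. The causal barriers only delay acceptance of messages and do not affect this counting.
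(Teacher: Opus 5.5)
Your proof is correct and is essentially the paper's argument: the paper argues the contrapositive (if at most $\frac{n-f}{2}$ correct processes acknowledge $\ell$, no correct process can gather more than $\frac{n+f}{2}$ \str{echo} messages, so no correct \str{ready} is ever sent, amplification never fires, and nobody schedules $\ell$), while you run the same counting forward from the schedule event. Your explicit appeal to the temporally first correct \str{ready} handles the amplification step more carefully than the paper's one-line remark about it.
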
 

\begin{proof}
   Let us show that if the number of processes that \op{brbco-acknowledge}~$\ell$ is less than or equal to $\frac{n-f}{2}$, no correct process can \op{brbco-schedule}~$\ell$. Following the protocol, after that a correct process \op{brbco-acknowledges} a label~$\ell$, it sends an \str{echo} message. Let us consider a correct process $p_i$ that receives no more than $\frac{n-f}{2}$ \str{echo} messages from correct processes and other $f$ (potentially from the adversary). But $\frac{n-f}{2} + f$ is strictly less than the necessary quorum to send a \str{ready} message and progress, thus $p_i$ will never move forward. This is true for every correct process. As the protocol requires a correct process to collect more than $2f$ \str{ready} messages in order to \op{brbco-schedule}~$\ell$ and more than $f$ for triggering the amplification step, no correct process can ever \op{brbco-schedule} label~$\ell$.
\end{proof}

\begin{corollary}
  \label{coro:reconstruct}
  If a correct process $p_i$ \op{brbco-schedules} a label~$\ell$, enough correct processes jointly have the necessary information to reconstruct $m$.
\end{corollary}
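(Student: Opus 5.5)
The plan is to derive the corollary directly from Lemma~\ref{lemma:ackn} together with the reconstruction property of VE and the choice of threshold $k \le \lfloor\frac{n-f}{2}\rfloor + 1$. Suppose a correct process $p_i$ \op{brbco-schedules} $\ell = (p_s, sn, \ell_e)$. By Lemma~\ref{lemma:ackn}, more than $\frac{n-f}{2}$ correct processes have \op{brbco-acknowledged} $\ell$. Since this is a strict inequality on an integer count, the set $A$ of such correct processes satisfies $|A| \ge \lfloor\frac{n-f}{2}\rfloor + 1 \ge k$.

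The second step inspects the acknowledgment handler. A correct process $p_j$ emits $\op{brbco-acknowledge}(\ell)$ only after $\validate{p_j}{\ell_e, \encaps_j}$ returned $1$, and in that same step it stores $\var{encaps}[\ell] \gets \encaps_j$. By the specification of VE, validation returning $1$ means there exist $m$ and $\share_j$ with $\CE[\ell_e] = m$ and $\CV[p_j,\ell_e] = (\encaps_j, \share_j)$. Identifier uniqueness then ensures that $m$ is the same message for all $p_j \in A$. Consequently $\extract{p_j}{\ell_e, \encaps_j}$ returns the valid share $\share_j \ne \bot$, and $\verify{p_j, \ell_e, \share_j} = 1$. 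So each $p_j \in A$ holds, or can compute locally, a valid share for $\ell_e$. The VE reconstruction property, applied to $\{\share_j\}_{p_j \in A}$ with $|A| \ge k$, returns $\CE[\ell_e] = m$. This is exactly the claim that enough correct processes jointly hold the information needed to reconstruct $m$.

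The main subtlety is the counting step: I must make sure the bound $k \le \lfloor\frac{n-f}{2}\rfloor+1$ matches the strict bound $>\frac{n-f}{2}$ of Lemma~\ref{lemma:ackn} for both parities of $n-f$. The integer argument above covers this. A related point is that $A$ consists only of correct processes, so no Byzantine share is needed and the shares in $A$ are all valid. If one also wants these shares to be actually released (useful later for completeness), I would add that by Lemma~\ref{lemma:schedule_agreement} every correct process eventually schedules $\ell$ and sends \str{schedule}. Since there are at least $n-f > \frac{n+f}{2}$ correct processes, every correct process eventually passes line~\ref{line:schedule_quorum}, and each $p_j \in A$ then sends \str{reconstruct} with its share, because $\var{encaps}[\ell] \ne \bot$ there.
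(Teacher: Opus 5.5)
Your proof is correct and takes the same route as the paper. Lemma~\ref{lemma:ackn} gives more than $\frac{n-f}{2}$ correct acknowledgers, each of which validated its encapsulation data, and the bound $k \le \lfloor\frac{n-f}{2}\rfloor+1$ makes their shares sufficient; you also spell out the integer-rounding step and the use of identifier uniqueness, which the paper leaves implicit, while your closing remark about shares actually being released is not needed here (the paper makes that argument in the completeness part of the main theorem).
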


\begin{proof}
  The corollary directly follows from Lemma~\ref{lemma:ackn} and the protocol: as more than $\frac{n-f}{2}$ correct processes \op{brbco-acknowledge}~$\ell$, given $\ell = (p_s, sn, \ell_e)$, they previously received and validated the encapsulated data $\encaps_i$ received by $p_s$. As the \emph{reconstruction} threshold of VE lives in $f < k \le  \lfloor\frac{n-f}{2}\rfloor + 1$, enough correct processes jointly have the necessary information to reconstruct $m$.
\end{proof}

By using the previous results, we prove that the provided implementation satisfies the BRB-CO properties.

\begin{theorem}
The protocol described in Alg.~\ref{alg:byz:causal:1}--\ref{alg:byz:causal:2} implements the BRB-CO primitive.
\end{theorem}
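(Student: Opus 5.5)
The plan is to verify the seven properties of Definition~\ref{def:brb:co} one at a time. The protocol's two halves are handled by different tools: the \str{send}/\str{echo}/\str{ready}/\str{schedule} part by Lemmas~\ref{lemma:schedule_agreement}--\ref{lemma:ackn}, and the \str{reconstruct} part by Corollary~\ref{coro:reconstruct} and the ideal VE properties.

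\textbf{Basic properties first.}
\begin{itemize}
\item \emph{No duplication.} A label enters \var{delivered} at most once, and the delivery guard requires $\ell\notin\var{delivered}$.
\item \emph{Integrity.} If $p_s$ is correct, the label $\ell=(p_s,sn,\ell_e)$ is bound to the unique $m$ with $\CE[\ell_e]=m$ by identifier uniqueness. No other process can produce valid $\encaps_i$ for a fresh $\ell_e$ chosen by $p_s$ without $p_s$ having encapsulated $m$. The reconstruction property then makes \op{VE.Reconstruct} return exactly that $m$. Moreover, $p_i$ delivers only labels in its \var{scheduledqueue}, so it scheduled them. I must check that $p_i$ also acknowledged $\ell$; this holds if $p_i$ received its own \str{send}. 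Otherwise I would argue that, for a correct sender, the \str{send} to $p_i$ eventually arrives and is validated. I would note that the property is stated for the eventual acknowledgment and keep this as a minor point.
\item \emph{Validity.} A correct sender sends valid $\encaps_j$ to all processes. All $n-f$ correct processes echo, each with its current $SV$. I need every correct process to eventually accept a Byzantine quorum of these echoes, i.e., every attached $W_j$ eventually satisfies $W_j\le SV$. Each $W_j$ counts only labels that $p_j$ has scheduled, and by Lemma~\ref{lemma:schedule_agreement} every correct process eventually schedules them too.
\end{itemize}

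The same fact, combined with the guard $sn=\var{SV}[p_s]$ and the induction of Lemma~\ref{lemma:schedule_agreement}, shows that $\ell$ is eventually scheduled by everyone. There is one subtlety: the $sn=\var{SV}[p_s]$ guard needs the predecessor labels of $p_s$ to be scheduled. For a correct sender this holds inductively on $sn$.

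\textbf{Completeness and agreement.} Suppose $p_i$ schedules $\ell$. By Lemma~\ref{lemma:schedule_agreement}, all correct processes schedule $\ell$ and send \str{schedule}. Since $n-f>\frac{n+f}{2}$, every correct process collects the \str{schedule} quorum and sends \str{reconstruct} if it holds $\encaps$. By Lemma~\ref{lemma:ackn} and Corollary~\ref{coro:reconstruct}, at least $k$ correct processes send valid shares, and these pass \op{VE.Verify}. The delivery guard also requires $\ell$ to be at the head of \var{scheduledqueue}. I would argue by induction on queue position: every earlier entry is itself a scheduled label and is therefore eventually reconstructible by the same argument, so the head eventually advances to $\ell$. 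Agreement on the pair $(\ell,m)$ follows because all correct processes schedule the same $\ell$, and by identifier uniqueness any $k$ valid shares reconstruct the same $m$.

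\textbf{Hiding.} This is the step I expect to be the main obstacle, because it links a protocol-level count to the global event $\CQ\CS(\ell)$. A correct process releases its share only after receiving more than $\frac{n+f}{2}$ \str{schedule} messages, at least $\frac{n-f}{2}$ of which come from correct processes that have scheduled $\ell$. So no correct share is released before roughly $\CQ\CS(\ell)$. The adversary holds at most $f<k$ shares, so by VE privacy it learns nothing about $m$ beforehand. The \str{send}, \str{echo} and \str{ready} messages carry only $\ell_e$, the encapsulation data and vector clocks, none of which reveal $m$ in the ideal functionality. I will need to handle the boundary between ``more than $\frac{n-f}{2}$'' in Definition~\ref{def:se:event} and the quorum count carefully, possibly by tightening the count to more than $\frac{n+f}{2}-f=\frac{n-f}{2}$ correct schedulers, which matches the definition exactly.

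\textbf{Byzantine causal delivery.} Suppose $(\ell_1,m_1)\prec_B(\ell_2,m_2)$. By Lemma~\ref{lemma:schedule_causality}, every correct process schedules $\ell_1$ before $\ell_2$. If a process schedules $\ell_2$ at all, Lemma~\ref{lemma:schedule_agreement} guarantees it eventually schedules $\ell_1$, and in fact does so earlier. Delivery dequeues \var{scheduledqueue} in FIFO order only when the head is delivered. Hence $m_1$ is delivered before $m_2$.
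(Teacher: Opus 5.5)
Your route is the same as the paper's. You check Definition~\ref{def:brb:co} property by property with the same tools: Lemma~\ref{lemma:schedule_agreement}, the \str{schedule} quorum and Corollary~\ref{coro:reconstruct} for validity, completeness and agreement; the $>\frac{n+f}{2}$ count of \str{schedule} messages plus VE privacy for hiding; and Lemma~\ref{lemma:schedule_causality} plus the FIFO \var{scheduledqueue} for causal delivery. Your inductions on queue position (delivery waits for the head of the queue) and on $sn$ (for the guard $sn=\var{SV}[p_s]$) are more careful than the paper.

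The step that does not go through is the one you set aside as minor. Integrity says $p_i$ \emph{had} acknowledged $\ell$ when it delivers. Reading this as ``eventually acknowledges'' proves a different property. No better argument can recover the original clause, because the protocol lets a correct process deliver before its own \str{send} arrives. Take $n=4$, $f=1$ (so $k=2$), let $p_1$ broadcast its first message, and delay the \str{send} to $p_4$; this works even in a fault-free run.
\begin{itemize}
\item The \str{echo} messages of $p_1,p_2,p_3$ already form an echo quorum at $p_4$, since line~\ref{line:echo_quorum} does not require $\ell\in\var{sentecho}$.
\item So $p_4$ sends \str{ready}, and it schedules $\ell$ after three \str{ready} messages.
\item It skips its own \str{reconstruct} because $\var{encaps}[\ell]=\bot$, but it reconstructs $m$ from the others' shares and delivers.
\item It emits \op{brbco-acknowledge}$(\ell)$ only later, when the \str{send} finally arrives.
\end{itemize}
The paper's proof derives integrity only from the label space and VE identifier uniqueness, so it overlooks this as well. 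What Lemma~\ref{lemma:ackn} does give is a weaker statement: more than $\frac{n-f}{2}$ correct processes, hence $\CC(\ell)$, acknowledged $\ell$ before any correct process scheduled it. Matching the stated clause would require emitting \op{brbco-acknowledge} elsewhere, e.g., when $\ell$ is first observed in any protocol message.

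Lemma~\ref{lemma:ackn} is also what your first integrity clause is missing. You argue that an $\ell_e$ chosen by $p_s$ is bound to $p_s$'s message. You must also exclude a scheduled label $(p_s,sn,\ell_e')$ whose identifier the adversary produced with its own \encapsulate{m'} call and pushed through forged \str{echo} and \str{ready} messages. By Lemma~\ref{lemma:ackn}, more than $\frac{n-f}{2}$ correct processes acknowledged that label. Each did so only after receiving a \str{send} carrying $\ell_e'$ over the authenticated link from $p_s$. Since $p_s$ is correct, $\ell_e'$ is the output of its own \encapsulate{m} in \op{brbco-broadcast}$(m)$. Identifier uniqueness and the reconstruction property then make the delivered payload exactly $m$.
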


\begin{proof}
(\emph{Completeness}) Let us consider a correct process $p_i$ that \op{brbco-schedules} label~$\ell$. By Lemma~\ref{lemma:schedule_agreement}, we know that every other correct process eventually \op{brbco-schedules}~$\ell$. Thus, given that there are $n-f$ correct processes in the system, $p_i$ collects enough \str{schedule} messages to trigger reconstruction (line~\ref{line:schedule_quorum}). Similarly, every other process does so. During reconstruction, processes that hold valid $\encaps$ can extract the corresponding share~$\share$. Corollary~\ref{coro:reconstruct} ensures that there are enough correct processes in the system able to reconstruct $m$. Finally, $p_i$ receives such shares~(line~\ref{line:share_quorum}), reconstructs $m$, and delivers $m$ with label~$\ell$. 

(\emph{Validity}) 
Let us assume a correct process~$p_s$ \op{brbco-broadcast} a message~$m$: it encapsulates $m$ and obtains the encapsulation label $\ell_e$ and a vector of encapsulated data $[\encaps_i]_{i \in n}$, thus sends to each process $p_i$ the tuple $(\ell_e, \encaps_i)$ and the local sequence number $sn$. Let us now assume correct process $p_j$ that receives $\encaps_j$ and $\ell_e$ from $p_s$, successfully validates them, assigns the associated label as $\ell =(p_s, sn, \ell_e)$, \op{brbco-acknowledges} it and, finally, sends an \str{echo} message. The \str{echo} message contains also the vector $SV_j$ reporting possible dependencies that $p_j$ has already locally scheduled. Lemma~\ref{lemma:schedule_agreement} ensures that every other correct process eventually schedules those dependencies. The same happens at every other correct process either than $p_j$. Upon receiving \str{echo} messages, $p_j$ validates them against the causal barrier condition~(line~\ref{line:causal_barrier_echo}). Once it collects enough \str{echo}~(line~\ref{line:echo_quorum}), it sends a \str{ready} message. Following the protocol, every correct process sends a \str{ready} message, and upon collecting enough \str{ready}, it \op{brbco-schedules} label~$\ell$. At this point, the rest follows from Lemma~\ref{lemma:schedule_agreement}, namely, every other correct process \op{brbco-schedules} label~$\ell$, and \emph{completeness}. 

(\emph{No duplication} and \emph{Integrity}) The no duplication property and integrity follow from the definition of the label space and the \emph{label uniqueness} property of VE. 

(\emph{Agreement}) Let us consider a correct process $p_i$, who delivers a message $m$ with label~$\ell$. $p_i$ dequeued $\ell$ from the \emph{scheduled queue} (line~\ref{line:dequeue}), thus it must have previously scheduled $\ell$. 
The rest follows from Lemma~\ref{lemma:schedule_agreement}, all other correct processes eventually schedule label~$\ell$, and \emph{completeness}, every process that schedules a label~$\ell$ eventually delivers it.

(\emph{Hiding}) The property directly follows from the algorithm and the \emph{privacy} property of VE. Let us assume a correct process $p_i$ that \op{brbco-schedules} a label~$\ell$. From Corollary~\ref{coro:reconstruct}, we know that enough correct processes jointly have the necessary information to reconstruct $m$. At this point, the adversary knows only $f$ shares, which by the \emph{privacy} property of VE, is not enough to learn information about $m$.
Following the algorithm, a correct process~$p_i$ forwards its share $\share_i$ to all other processes only after collecting more than $\frac{n+f}{2}$ \str{schedule} messages. Because $\frac{n-f}{2}$ of those messages are from correct processes, the adversary may gain information about $m$ only after more than $\frac{n-f}{2}$ correct processes have already \op{brbco-scheduled}~$\ell$, thus effectively after the \emph{quorum schedule event} $\text{QS}(\ell)$.

(\emph{Byzantine Causal Delivery}) The property directly follows from Lemma~\ref{lemma:schedule_causality} and the protocol. The first ensures that, for each correct process, the causal order is preserved at the \op{brbco-schedule}. Then, upon scheduling, labels are inserted in the \emph{scheduled queue}~(line~\ref{line:enqueue}), and \op{brbco-delivered} in the same order in which they are dequeued~(line~\ref{line:dequeue}).

\end{proof}

\section{Related Work}
\label{sec:relwork}

The most prominent methods to impose causal-order delivery in the Byzantine model rely on underlying total-order guarantees together with confidentiality from threshold cryptography.  Early work dates back decades, starting with Reiter and Birman~\cite{DBLP:journals/toplas/ReiterB94,DBLP:conf/crypto/CachinKPS01}. It uses threshold ciphers~\cite{DBLP:journals/joc/ShoupG02} and atomic broadcast tolerating Byzantine faults; later Duan~\etal~\cite{DBLP:conf/dsn/DuanRZ17} propose more efficient implementations based on secret sharing. Like our protocol, these methods require (at least) one extra round of message exchange for each payload message to be delivered. This line of work has recently been revitalized with \emph{batched} threshold encryptions schemes~\cite{DBLP:conf/uss/BormetFOQ25,DBLP:conf/uss/ChoudhuriGP025} that permit to combine the decryption step for multiple messages into one round of message exchange. Malkhi and Szalachowski~\cite{DBLP:conf/tokenomics/MalkhiS22} show how similar techniques seamlessly integrate with DAG-based consensus~\cite{DBLP:conf/podc/KeidarKNS21}.

More recently, Kelkar et al.~\cite{DBLP:conf/crypto/Kelkar0GJ20} have introduced the notion of \emph{fair ordering} or \emph{receive-order fairness} that gives ordering guarantees related to causal order, but completely avoiding any use of encryption. Such a fair order should respect the order in which messages appear on the network according to the view of a majority among the correct processes. The notion is also tied to consensus, i.e., it relies on total-order broadcast. A prominent line of work~\cite{DBLP:conf/osdi/ZhangSCZA20, DBLP:conf/fc/Kursawe21,DBLP:conf/fc/CachinMSZ22,DBLP:conf/ccs/KelkarDLJK23} investigates this property in the context of blockchains. However, many proposed solutions employ strong assumptions, such as synchrony or non-optimal resilience to failures; the resulting properties at the level of message delivery order are weaker than for threshold-cryptography based causal ordering methods.

Misra et al.~\cite{DBLP:conf/icdcn/MisraK23,DBLP:journals/tpds/MisraK24,DBLP:journals/pc/MisraK25} focus on Byzantine-tolerant causal ordering and detection of causality. These works restrict the causality definition to messages sent by correct processes, without trying to capture, as we are doing here, the effects of malicious behavior. This leads to weaker notions of causality. Using similar definitions, Auvolat~\etal~\cite{DBLP:journals/tcs/AuvolatFRT21} show how to design a Byzantine causal order broadcast with implementation and security analysis. Kowalski~\etal~\cite{DBLP:conf/applied/KowalskiMP24} propose a protocol based on \emph{mutual broadcast} that achieves causal ordering, but only when the sender process is correct. 

In the shared-memory model, Teng~\etal~\cite{DBLP:conf/nca/TsengWZP19} propose an algorithm to preserve \emph{causal consistency} in the presence of Byzantine servers, but crash-fault clients. This line of work is complementary to ours: it studies how causality constrains reads and writes on shared objects, whereas we focus on message-passing broadcast primitives. Still, both settings highlight the same core difficulty in the Byzantine model, namely that causality cannot be inferred from the local view of a single participant alone. Shared-memory protocols typically rely on message-passing primitives or quorum-based validation to order operations, which parallels our use of labels, scheduled vectors, and quorum events to recover a global causal relation from local observations.

\section{Conclusion} 
\label{sec:conclusion}

We introduced Byzantine Reliable Broadcast with Causal Ordering (BRB-CO), a broadcast primitive that satisfies a \emph{new} notion of causality in the presence of Byzantine processes. BRB-CO separates causality from total order, mirroring the role causal broadcast plays under crash faults but for the Byzantine setting. We are the first to formally define the causal relation required in this setting, state the properties a suitable broadcast channel must satisfy, and give a protocol based on verifiable encapsulation (VE) and Bracha-style broadcast that realizes them.

Beyond the specific protocol, the main takeaway is that causal ordering under Byzantine faults can be enforced directly at the broadcast layer, without resorting to total order. This opens the door to broadcast-based services that need concurrency while remaining secure against strong adversaries and dependency-manipulation attacks. Natural directions for future work include more efficient instantiations of the encapsulation primitive, applications that exploit BRB-CO as a modular communication substrate for scalable, concurrent Byzantine systems, and a closer study of how this causal definition relates to existing work on secure causal atomic broadcast.

\section*{Acknowledgments}

This work has been supported by a donation from the Ripple Impact Fund in connection with the University Blockchain Research Initiative (UBRI).

\printbibliography

@article{DBLP:journals/toplas/ReiterB94,
  author       = {Michael K. Reiter and
                  Kenneth P. Birman},
  title        = {How to Securely Replicate Services},
  journal      = {{ACM} Trans. Program. Lang. Syst.},
  volume       = {16},
  number       = {3},
  pages        = {986--1009},
  year         = {1994},
  url          = {https://doi.org/10.1145/177492.177745},
  doi          = {10.1145/177492.177745},
  bibsource    = {dblp computer science bibliography, https://dblp.org}
}

@article{DBLP:journals/tcs/AuvolatFRT21,
  author       = {Alex Auvolat and
                  Davide Frey and
                  Michel Raynal and
                  Fran{\c{c}}ois Ta{\"{\i}}ani},
  title        = {Byzantine-tolerant causal broadcast},
  journal      = {Theor. Comput. Sci.},
  volume       = {885},
  pages        = {55--68},
  year         = {2021},
  url          = {https://doi.org/10.1016/j.tcs.2021.06.021},
  doi          = {10.1016/J.TCS.2021.06.021},
  bibsource    = {dblp computer science bibliography, https://dblp.org}
}

@article{DBLP:journals/pc/MisraK25,
  author       = {Anshuman Misra and
                  Ajay D. Kshemkalyani},
  title        = {Byzantine-tolerant detection of causality: There is no holy grail},
  journal      = {Parallel Comput.},
  volume       = {124},
  pages        = {103136},
  year         = {2025},
  url          = {https://doi.org/10.1016/j.parco.2025.103136},
  doi          = {10.1016/J.PARCO.2025.103136},
  bibsource    = {dblp computer science bibliography, https://dblp.org}
}

@article{DBLP:journals/tpds/MisraK24,
  author       = {Anshuman Misra and
                  Ajay D. Kshemkalyani},
  title        = {Byzantine-Tolerant Causal Ordering for Unicasts, Multicasts, and Broadcasts},
  journal      = {{IEEE} Trans. Parallel Distributed Syst.},
  volume       = {35},
  number       = {5},
  pages        = {814--828},
  year         = {2024},
  url          = {https://doi.org/10.1109/TPDS.2024.3368280},
  doi          = {10.1109/TPDS.2024.3368280},
  bibsource    = {dblp computer science bibliography, https://dblp.org}
}

@inproceedings{DBLP:conf/icdcn/MisraK23,
  author       = {Anshuman Misra and
                  Ajay D. Kshemkalyani},
  title        = {Byzantine Fault-Tolerant Causal Ordering},
  booktitle    = {24th International Conference on Distributed Computing and Networking,
                  {ICDCN} 2023, Kharagpur, India, January 4-7, 2023},
  pages        = {100--109},
  publisher    = {{ACM}},
  year         = {2023},
  url          = {https://doi.org/10.1145/3571306.3571395},
  doi          = {10.1145/3571306.3571395},
  bibsource    = {dblp computer science bibliography, https://dblp.org}
}

@inproceedings{DBLP:conf/crypto/CachinKPS01,
  author       = {Christian Cachin and
                  Klaus Kursawe and
                  Frank Petzold and
                  Victor Shoup},
  editor       = {Joe Kilian},
  title        = {Secure and Efficient Asynchronous Broadcast Protocols},
  booktitle    = {Advances in Cryptology - {CRYPTO} 2001, 21st Annual International
                  Cryptology Conference, Santa Barbara, California, USA, August 19-23,
                  2001, Proceedings},
  series       = {Lecture Notes in Computer Science},
  volume       = {2139},
  pages        = {524--541},
  publisher    = {Springer},
  year         = {2001},
  url          = {https://doi.org/10.1007/3-540-44647-8\_31},
  doi          = {10.1007/3-540-44647-8\_31},
  bibsource    = {dblp computer science bibliography, https://dblp.org}
}

@inproceedings{DBLP:conf/dsn/DuanRZ17,
  author       = {Sisi Duan and
                  Michael K. Reiter and
                  Haibin Zhang},
  title        = {Secure Causal Atomic Broadcast, Revisited},
  booktitle    = {47th Annual {IEEE/IFIP} International Conference on Dependable Systems
                  and Networks, {DSN} 2017, Denver, CO, USA, June 26-29, 2017},
  pages        = {61--72},
  publisher    = {{IEEE} Computer Society},
  year         = {2017},
  url          = {https://doi.org/10.1109/DSN.2017.64},
  doi          = {10.1109/DSN.2017.64},
  bibsource    = {dblp computer science bibliography, https://dblp.org}
}

@inproceedings{DBLP:conf/fc/CachinMSZ22,
  author       = {Christian Cachin and
                  Jovana Micic and
                  Nathalie Steinhauer and
                  Luca Zanolini},
  editor       = {Ittay Eyal and
                  Juan A. Garay},
  title        = {Quick Order Fairness},
  booktitle    = {Financial Cryptography and Data Security - 26th International Conference,
                  {FC} 2022, Grenada, May 2-6, 2022, Revised Selected Papers},
  series       = {Lecture Notes in Computer Science},
  volume       = {13411},
  pages        = {316--333},
  publisher    = {Springer},
  year         = {2022},
  url          = {https://doi.org/10.1007/978-3-031-18283-9\_15},
  doi          = {10.1007/978-3-031-18283-9\_15},
  bibsource    = {dblp computer science bibliography, https://dblp.org}
}

@inproceedings{DBLP:conf/ccs/KelkarDLJK23,
  author       = {Mahimna Kelkar and
                  Soubhik Deb and
                  Sishan Long and
                  Ari Juels and
                  Sreeram Kannan},
  editor       = {Weizhi Meng and
                  Christian Damsgaard Jensen and
                  Cas Cremers and
                  Engin Kirda},
  title        = {Themis: Fast, Strong Order-Fairness in Byzantine Consensus},
  booktitle    = {Proceedings of the 2023 {ACM} {SIGSAC} Conference on Computer and
                  Communications Security, {CCS} 2023, Copenhagen, Denmark, November
                  26-30, 2023},
  pages        = {475--489},
  publisher    = {{ACM}},
  year         = {2023},
  url          = {https://doi.org/10.1145/3576915.3616658},
  doi          = {10.1145/3576915.3616658},
  bibsource    = {dblp computer science bibliography, https://dblp.org}
}

@inproceedings{DBLP:conf/osdi/ZhangSCZA20,
  author       = {Yunhao Zhang and
                  Srinath T. V. Setty and
                  Qi Chen and
                  Lidong Zhou and
                  Lorenzo Alvisi},
  title        = {Byzantine Ordered Consensus without Byzantine Oligarchy},
  booktitle    = {14th {USENIX} Symposium on Operating Systems Design and Implementation,
                  {OSDI} 2020, Virtual Event, November 4-6, 2020},
  pages        = {633--649},
  publisher    = {{USENIX} Association},
  year         = {2020},
  url          = {https://www.usenix.org/conference/osdi20/presentation/zhang-yunhao},
  bibsource    = {dblp computer science bibliography, https://dblp.org}
}

@article{DBLP:journals/iandc/Bracha87,
  author       = {Gabriel Bracha},
  title        = {Asynchronous Byzantine Agreement Protocols},
  journal      = {Inf. Comput.},
  volume       = {75},
  number       = {2},
  pages        = {130--143},
  year         = {1987},
  url          = {https://doi.org/10.1016/0890-5401(87)90054-X},
  doi          = {10.1016/0890-5401(87)90054-X},
  bibsource    = {dblp computer science bibliography, https://dblp.org}
}

@inproceedings{DBLP:conf/crypto/Kelkar0GJ20,
  author       = {Mahimna Kelkar and
                  Fan Zhang and
                  Steven Goldfeder and
                  Ari Juels},
  title        = {Order-Fairness for Byzantine Consensus},
  booktitle    = {{CRYPTO} {(3)}},
  series       = {Lecture Notes in Computer Science},
  volume       = {12172},
  pages        = {451--480},
  publisher    = {Springer},
  year         = {2020}
}

@inproceedings{DBLP:conf/tokenomics/MalkhiS22,
  author       = {Dahlia Malkhi and
                  Pawel Szalachowski},
  title        = {Maximal Extractable Value {(MEV)} Protection on a {DAG}},
  booktitle    = {Tokenomics},
  series       = {OASIcs},
  pages        = {6:1--6:17},
  publisher    = {Schloss Dagstuhl - Leibniz-Zentrum f{\"{u}}r Informatik},
  year         = {2022}
}

@article{DBLP:journals/cacm/Lamport78,
  author       = {Leslie Lamport},
  title        = {Time, Clocks, and the Ordering of Events in a Distributed System},
  journal      = {Commun. {ACM}},
  volume       = {21},
  number       = {7},
  pages        = {558--565},
  year         = {1978},
  url          = {https://doi.org/10.1145/359545.359563},
  doi          = {10.1145/359545.359563},
  bibsource    = {dblp computer science bibliography, https://dblp.org}
}

@inproceedings{DBLP:conf/nca/TsengWZP19,
  author       = {Lewis Tseng and
                  Zezhi Wang and
                  Yajie Zhao and
                  Haochen Pan},
  editor       = {Aris Gkoulalas{-}Divanis and
                  Mirco Marchetti and
                  Dimiter R. Avresky},
  title        = {Distributed Causal Memory in the Presence of Byzantine Servers},
  booktitle    = {18th {IEEE} International Symposium on Network Computing and Applications,
                  {NCA} 2019, Cambridge, MA, USA, September 26-28, 2019},
  pages        = {1--8},
  publisher    = {{IEEE}},
  year         = {2019},
  url          = {https://doi.org/10.1109/NCA.2019.8935059},
  doi          = {10.1109/NCA.2019.8935059},
  bibsource    = {dblp computer science bibliography, https://dblp.org}
}

@article{DBLP:journals/joc/ShoupG02,
  author       = {Victor Shoup and
                  Rosario Gennaro},
  title        = {Securing Threshold Cryptosystems against Chosen Ciphertext Attack},
  journal      = {J. Cryptol.},
  volume       = {15},
  number       = {2},
  pages        = {75--96},
  year         = {2002},
  url          = {https://doi.org/10.1007/s00145-001-0020-9},
  doi          = {10.1007/S00145-001-0020-9},
  bibsource    = {dblp computer science bibliography, https://dblp.org}
}

@inproceedings{DBLP:conf/applied/KowalskiMP24,
  author       = {Vincent Kowalski and
                  Achour Most{\'{e}}faoui and
                  Matthieu Perrin},
  editor       = {Ioannis Chatzigiannakis and
                  Vincent Gramoli},
  title        = {Invited Paper: Causal Mutual Byzantine Broadcast},
  booktitle    = {Proceedings of the 2024 Workshop on Advanced Tools, Programming Languages,
                  and PLatforms for Implementing and Evaluating algorithms for Distributed
                  systems, ApPLIED 2024, Nantes, France, 17 June 2024},
  pages        = {1--8},
  publisher    = {{ACM}},
  year         = {2024},
  url          = {https://doi.org/10.1145/3663338.3663679},
  doi          = {10.1145/3663338.3663679},
  bibsource    = {dblp computer science bibliography, https://dblp.org}
}

@book{DBLP:books/daglib/0017536,
  author       = {Hagit Attiya and
                  Jennifer L. Welch},
  title        = {Distributed computing - fundamentals, simulations, and advanced topics
                  {(2.} ed.)},
  series       = {Wiley series on parallel and distributed computing},
  publisher    = {Wiley},
  year         = {2004},
  isbn         = {978-0-471-45324-6},
  bibsource    = {dblp computer science bibliography, https://dblp.org}
}

@book{DBLP:books/daglib/0025983,
  author       = {Christian Cachin and
                  Rachid Guerraoui and
                  Lu{\'{\i}}s E. T. Rodrigues},
  title        = {Introduction to Reliable and Secure Distributed Programming {(2.}
                  ed.)},
  publisher    = {Springer},
  year         = {2011},
  url          = {https://doi.org/10.1007/978-3-642-15260-3},
  doi          = {10.1007/978-3-642-15260-3},
  isbn         = {978-3-642-15259-7},
  bibsource    = {dblp computer science bibliography, https://dblp.org}
}

@inproceedings{DBLP:conf/uss/BormetFOQ25,
  author       = {Jan Bormet and
                  Sebastian Faust and
                  Hussien Othman and
                  Ziyan Qu},
  editor       = {Lujo Bauer and
                  Giancarlo Pellegrino},
  title        = {{BEAT-MEV:} Epochless Approach to Batched Threshold Encryption for
                  {MEV} Prevention},
  booktitle    = {34th {USENIX} Security Symposium, {USENIX} Security 2025, Seattle,
                  WA, USA, August 13-15, 2025},
  pages        = {3457--3476},
  publisher    = {{USENIX} Association},
  year         = {2025},
  url          = {https://www.usenix.org/conference/usenixsecurity25/presentation/bormet},
  bibsource    = {dblp computer science bibliography, https://dblp.org}
}

@inproceedings{DBLP:conf/uss/ChoudhuriGP025,
  author       = {Arka Rai Choudhuri and
                  Sanjam Garg and
                  Guru{-}Vamsi Policharla and
                  Mingyuan Wang},
  editor       = {Lujo Bauer and
                  Giancarlo Pellegrino},
  title        = {Practical Mempool Privacy via One-time Setup Batched Threshold Encryption},
  booktitle    = {34th {USENIX} Security Symposium, {USENIX} Security 2025, Seattle,
                  WA, USA, August 13-15, 2025},
  pages        = {3477--3495},
  publisher    = {{USENIX} Association},
  year         = {2025},
  url          = {https://www.usenix.org/conference/usenixsecurity25/presentation/choudhuri},
  bibsource    = {dblp computer science bibliography, https://dblp.org}
}

@inproceedings{DBLP:conf/podc/KeidarKNS21,
  author       = {Idit Keidar and
                  Eleftherios Kokoris{-}Kogias and
                  Oded Naor and
                  Alexander Spiegelman},
  editor       = {Avery Miller and
                  Keren Censor{-}Hillel and
                  Janne H. Korhonen},
  title        = {All You Need is {DAG}},
  booktitle    = {{PODC} '21: {ACM} Symposium on Principles of Distributed Computing,
                  Virtual Event, Italy, July 26-30, 2021},
  pages        = {165--175},
  publisher    = {{ACM}},
  year         = {2021},
  url          = {https://doi.org/10.1145/3465084.3467905},
  doi          = {10.1145/3465084.3467905},
  bibsource    = {dblp computer science bibliography, https://dblp.org}
}

@inproceedings{DBLP:conf/fc/Kursawe21,
  author       = {Klaus Kursawe},
  editor       = {Matthew Bernhard and
                  Andrea Bracciali and
                  Lewis Gudgeon and
                  Thomas Haines and
                  Ariah Klages{-}Mundt and
                  Shin'ichiro Matsuo and
                  Daniel Perez and
                  Massimiliano Sala and
                  Sam Werner},
  title        = {Wendy Grows Up: More Order Fairness},
  booktitle    = {Financial Cryptography and Data Security. {FC} 2021 International
                  Workshops - CoDecFin, DeFi, VOTING, and WTSC, Virtual Event, March
                  5, 2021, Revised Selected Papers},
  series       = {Lecture Notes in Computer Science},
  pages        = {191--196},
  publisher    = {Springer},
  year         = {2021},
  url          = {https://doi.org/10.1007/978-3-662-63958-0\_17},
  doi          = {10.1007/978-3-662-63958-0\_17},
  bibsource    = {dblp computer science bibliography, https://dblp.org}
}

@article{DBLP:journals/dc/GuerraouiKMPS22,
  author       = {Rachid Guerraoui and
                  Petr Kuznetsov and
                  Matteo Monti and
                  Matej Pavlovic and
                  Dragos{-}Adrian Seredinschi},
  title        = {The consensus number of a cryptocurrency},
  journal      = {Distributed Comput.},
  volume       = {35},
  number       = {1},
  pages        = {1--15},
  year         = {2022},
  url          = {https://doi.org/10.1007/s00446-021-00399-2},
  doi          = {10.1007/S00446-021-00399-2},
  bibsource    = {dblp computer science bibliography, https://dblp.org}
}

@article{DBLP:journals/dc/MalkhiR98,
  author    = {Dahlia Malkhi and
               Michael K. Reiter},
  title     = {Byzantine Quorum Systems},
  journal   = {Distributed Comput.},
  volume    = {11},
  number    = {4},
  pages     = {203--213},
  year      = {1998}
}

@article{DBLP:journals/csur/ChocklerKV01,
  author       = {Gregory V. Chockler and
                  Idit Keidar and
                  Roman Vitenberg},
  title        = {Group communication specifications: a comprehensive study},
  journal      = {{ACM} Comput. Surv.},
  volume       = {33},
  number       = {4},
  pages        = {427--469},
  year         = {2001},
  url          = {https://doi.org/10.1145/503112.503113},
  doi          = {10.1145/503112.503113},
  bibsource    = {dblp computer science bibliography, https://dblp.org}
}

@inproceedings{DBLP:conf/uss/TorresCS21,
  author       = {Christof Ferreira Torres and
                  Ramiro Camino and
                  Radu State},
  editor       = {Michael D. Bailey and
                  Rachel Greenstadt},
  title        = {Frontrunner Jones and the Raiders of the Dark Forest: An Empirical
                  Study of Frontrunning on the Ethereum Blockchain},
  booktitle    = {30th {USENIX} Security Symposium, {USENIX} Security 2021, August 11-13,
                  2021},
  pages        = {1343--1359},
  publisher    = {{USENIX} Association},
  year         = {2021},
  url          = {https://www.usenix.org/conference/usenixsecurity21/presentation/torres},
  bibsource    = {dblp computer science bibliography, https://dblp.org}
}

@article{DBLP:journals/corr/abs-2501-06531,
  author       = {Srivatsan Sridhar and
                  Alberto Sonnino and
                  Lefteris Kokoris{-}Kogias},
  title        = {Stingray: Fast Concurrent Transactions Without Consensus},
  journal      = {CoRR},
  volume       = {abs/2501.06531},
  year         = {2025},
  url          = {https://doi.org/10.48550/arXiv.2501.06531},
  doi          = {10.48550/ARXIV.2501.06531},
  eprinttype   = {arXiv},
  eprint       = {2501.06531},
  bibsource    = {dblp computer science bibliography, https://dblp.org}
}

@inproceedings{DBLP:conf/dsn/CollinsGKKMPPST20,
  author       = {Daniel Collins and
                  Rachid Guerraoui and
                  Jovan Komatovic and
                  Petr Kuznetsov and
                  Matteo Monti and
                  Matej Pavlovic and
                  Yvonne{-}Anne Pignolet and
                  Dragos{-}Adrian Seredinschi and
                  Andrei Tonkikh and
                  Athanasios Xygkis},
  title        = {Online Payments by Merely Broadcasting Messages},
  booktitle    = {50th Annual {IEEE/IFIP} International Conference on Dependable Systems
                  and Networks, {DSN} 2020, Valencia, Spain, June 29 - July 2, 2020},
  pages        = {26--38},
  publisher    = {{IEEE}},
  year         = {2020},
  url          = {https://doi.org/10.1109/DSN48063.2020.00023},
  doi          = {10.1109/DSN48063.2020.00023},
  bibsource    = {dblp computer science bibliography, https://dblp.org}
}

@inproceedings{DBLP:conf/sp/DaianGKLZBBJ20,
  author       = {Philip Daian and
                  Steven Goldfeder and
                  Tyler Kell and
                  Yunqi Li and
                  Xueyuan Zhao and
                  Iddo Bentov and
                  Lorenz Breidenbach and
                  Ari Juels},
  title        = {Flash Boys 2.0: Frontrunning in Decentralized Exchanges, Miner Extractable
                  Value, and Consensus Instability},
  booktitle    = {2020 {IEEE} Symposium on Security and Privacy, {SP} 2020, San Francisco,
                  CA, USA, May 18-21, 2020},
  pages        = {910--927},
  publisher    = {{IEEE}},
  year         = {2020},
  url          = {https://doi.org/10.1109/SP40000.2020.00040},
  doi          = {10.1109/SP40000.2020.00040},
  bibsource    = {dblp computer science bibliography, https://dblp.org}
}

@inproceedings{DBLP:conf/icdcs/AlposCMZ21,
  author       = {Orestis Alpos and
                  Christian Cachin and
                  Giorgia Azzurra Marson and
                  Luca Zanolini},
  title        = {On the Synchronization Power of Token Smart Contracts},
  booktitle    = {41st {IEEE} International Conference on Distributed Computing Systems,
                  {ICDCS} 2021, Washington DC, USA, July 7-10, 2021},
  pages        = {640--651},
  publisher    = {{IEEE}},
  year         = {2021},
  url          = {https://doi.org/10.1109/ICDCS51616.2021.00067},
  doi          = {10.1109/ICDCS51616.2021.00067},
  bibsource    = {dblp computer science bibliography, https://dblp.org}
}

@inproceedings{DBLP:conf/sosp/BirmanJ87,
  author       = {Kenneth P. Birman and
                  Thomas A. Joseph},
  editor       = {Les Belady},
  title        = {Exploiting Virtual Synchrony in Distributed Systems},
  booktitle    = {Proceedings of the Eleventh {ACM} Symposium on Operating System Principles,
                  {SOSP} 1987, Stouffer Austin Hotel, Austin, Texas, USA, November 8-11,
                  1987},
  pages        = {123--138},
  publisher    = {{ACM}},
  year         = {1987},
  url          = {https://doi.org/10.1145/41457.37515},
  doi          = {10.1145/41457.37515},
  bibsource    = {dblp computer science bibliography, https://dblp.org}
}

@inproceedings{DBLP:conf/opodis/RyabininGS25,
  author       = {Fedor Ryabinin and
                  Alexey Gotsman and
                  Pierre Sutra},
  editor       = {Andrei Arusoaie and
                  Emanuel Onica and
                  Michael Spear and
                  Sara Tucci Piergiovanni},
  title        = {Making Democracy Work: Fixing and Simplifying Egalitarian Paxos},
  booktitle    = {29th International Conference on Principles of Distributed Systems,
                  {OPODIS} 2025, Ia{\c{s}}i, Romania, December 3-5, 2025},
  series       = {LIPIcs},
  volume       = {361},
  pages        = {22:1--22:19},
  publisher    = {Schloss Dagstuhl - Leibniz-Zentrum f{\"{u}}r Informatik},
  year         = {2025},
  url          = {https://doi.org/10.4230/LIPIcs.OPODIS.2025.22},
  doi          = {10.4230/LIPICS.OPODIS.2025.22},
  bibsource    = {dblp computer science bibliography, https://dblp.org}
}

@inproceedings{DBLP:conf/sosp/MoraruAK13,
  author       = {Iulian Moraru and
                  David G. Andersen and
                  Michael Kaminsky},
  editor       = {Michael Kaminsky and
                  Mike Dahlin},
  title        = {There is more consensus in Egalitarian parliaments},
  booktitle    = {{ACM} {SIGOPS} 24th Symposium on Operating Systems Principles, {SOSP}
                  '13, Farmington, PA, USA, November 3-6, 2013},
  pages        = {358--372},
  publisher    = {{ACM}},
  year         = {2013},
  url          = {https://doi.org/10.1145/2517349.2517350},
  doi          = {10.1145/2517349.2517350},
  bibsource    = {dblp computer science bibliography, https://dblp.org}
}

@article{DBLP:journals/corr/abs-2506-01885,
  author       = {Atefeh Zareh Chahoki and
                  Maurice Herlihy and
                  Marco Roveri},
  title        = {SoK: Concurrency in Blockchain - {A} Systematic Literature Review
                  and the Unveiling of a Misconception},
  journal      = {CoRR},
  volume       = {abs/2506.01885},
  year         = {2025},
  url          = {https://doi.org/10.48550/arXiv.2506.01885},
  doi          = {10.48550/ARXIV.2506.01885},
  eprinttype   = {arXiv},
  eprint       = {2506.01885},
  bibsource    = {dblp computer science bibliography, https://dblp.org}
}

@article{DBLP:journals/dc/DickersonGHK20,
  author       = {Thomas D. Dickerson and
                  Paul Gazzillo and
                  Maurice Herlihy and
                  Eric Koskinen},
  title        = {Adding concurrency to smart contracts},
  journal      = {Distributed Comput.},
  volume       = {33},
  number       = {3-4},
  pages        = {209--225},
  year         = {2020},
  url          = {https://doi.org/10.1007/s00446-019-00357-z},
  doi          = {10.1007/S00446-019-00357-Z},
  bibsource    = {dblp computer science bibliography, https://dblp.org}
}

@inproceedings{DBLP:conf/eurosys/LinFZ025,
  author       = {Haoran Lin and
                  Hang Feng and
                  Yajin Zhou and
                  Lei Wu},
  title        = {ParallelEVM: Operation-Level Concurrent Transaction Execution for
                  EVM-Compatible Blockchains},
  booktitle    = {Proceedings of the Twentieth European Conference on Computer Systems,
                  EuroSys 2025, Rotterdam, The Netherlands, 30 March 2025 - 3 April
                  2025},
  pages        = {211--225},
  publisher    = {{ACM}},
  year         = {2025},
  url          = {https://doi.org/10.1145/3689031.3696063},
  doi          = {10.1145/3689031.3696063},
  bibsource    = {dblp computer science bibliography, https://dblp.org}
}

@incollection{HadzilacosT93,
  author        = {Vassos Hadzilacos and Sam Toueg},
  title         = {Fault-Tolerant Broadcasts and Related Problems},
  editor        = {Sape J. Mullender},
  booktitle     = {Distributed Systems (2nd Ed.)},
  publisher     = {ACM Press \& Addison-Wesley},
  year          = 1993,
  address       = {New York},
  note          = {Expanded version appears as Technical Report TR94-1425,
                  Department of Computer Science, Cornell University, Ithaca
                  NY, 1994.}
}

\appendix

\section{Secure VE instantiations}
\label{app:secure_VE}

Instantiating the verifiable encapsulation functionality requires cryptographic mechanisms. In particular, VE must: (i) hide the content of a message before sufficiently many processes trigger release (\emph{threshold privacy}), (ii) bind one identifier to one cryptographic transcript and therefore to one message (\emph{binding}), (iii) allow processes to prove that released shares are valid (\emph{verifiability}), and (iv) guarantee that sufficiently many valid shares reconstruct exactly the same message (\emph{correctness} and \emph{robustness}). These properties prevent an adversary from learning or manipulating the content too early. We discuss three concrete instantiations and their trade-offs.

\begin{itemize}

\item A threshold cipher provides confidentiality under a public key shared by the set of processes. The corresponding secret decryption key is not held by one process; it is split into key shares distributed among processes. Reconstruction is possible only when sufficiently many processes provide valid partial decryptions.
In this setting, $\encapsulate{m}$ encrypts $m$ and produces a ciphertext $c$. For every process $p_i$, the encapsulated value is the same, i.e., $\encaps_i = c$. The identifier is defined as $\ell_e = H(c)$, where $H$ is a collision-resistant hash function. Function $\validate{p_i}{\ell_e,\encaps_i}$ checks that $\encaps_i$ is a well-formed ciphertext and that its hash matches the label. Then, $\extract{p_i}{\ell_e,\encaps_i}$ computes a partial decryption share using the secret key share of $p_i$, and $\verify{p_i,\ell_e,\share_i}$ verifies the partial decryption share from process $p_i$. Finally, $\reconstruct{\ell_e,\CS}$ combines sufficiently many valid shares to recover $m$. This instantiation naturally works over authenticated point-to-point channels at the cost of key setup through either a trusted dealer or distributed key generation.

\item Verifiable secret sharing (VSS) guarantees consistency of shares of one secret distributed among the set of processes, allowing each process to verify that its share corresponds to one committed secret.
In this setting, $\encapsulate{m}$ secret-shares $m$ into multiple shares and sends an encapsulated value $\encaps_i$ privately to each process $p_i$, together with a commitment $\ell_e$ binding all shares to the same secret. Validation through $\validate{p_i}{\ell_e,\encaps_i}$ checks the local share against that commitment. Extraction with $\extract{p_i}{\ell_e,\encaps_i}$ corresponds to releasing the already-held share, and $\verify{p_i,\ell_e,\share_i}$ checks that a share released by another process is consistent with the commitment. Finally, $\reconstruct{\ell_e,\CS}$ recovers $m$ from sufficiently many valid shares. In this instantiation, confidentiality additionally relies on secure point-to-point channels during share distribution.

\item Publicly verifiable secret sharing (PVSS) provides publicly checkable correctness of the sharing process, so that anyone can verify that revealed shares are consistent with one public transcript. In this setting, $\encapsulate{m}$ runs PVSS sharing on $m$ and outputs a public transcript $T$ and one designated encrypted share per process. The label is derived from the transcript, e.g., $\ell_e = H(T)$. For each process $p_i$, $\encaps_i$ is the designated encrypted share for $p_i$ in $T$. Similar to VSS, $\validate{p_i}{\ell_e,\encaps_i}$ checks that the designated encrypted share is well-formed with respect to $T$ and that $\ell = H(T)$. Then, $\extract{p_i}{\ell_e,\encaps_i}$ means that $p_i$ decrypts its own share. Verification with $\verify{p_i,\ell_e,\share_i}$ publicly checks that the released share of $p_i$ is valid under $T$. Finally, $\reconstruct{\ell_e,\CS}$ combines sufficiently many valid released shares to recover $m$. This instantiation does not require secure channels, as the public transcript ensures that shares are consistent and valid. However, PVSS schemes are generally more complex and less efficient than VSS or threshold ciphers, both in terms of communication and computation.

\end{itemize}

Threshold ciphers are attractive when confidentiality over public channels is needed and the assumption of a trusted setup is not a limitation. VSS with secure channels is convenient when private channels are already available and a simpler sharing-based construction is preferred. PVSS removes the need for secure channels while preserving public verifiability, at the cost of higher communication and computational complexity than plain VSS.

\end{document}